\newcommand{\stc}{\operatorname{sc}}
\newcommand{\rank}{\operatorname{rk}}
\newcommand{\Syn}{\operatorname{Syn}}
\newcommand{\Cerny}{\v{C}ern\'y}
\theoremstyle{plain}
\def\dotminus{\mathbin{\ooalign{\hss\raise1ex\hbox{.}\hss\cr
  \mathsurround=0pt$-$}}} 
\begin{document}
%

% Oder K_n and other families of slowly synchronizing automata
%\title{State Complexity of the Set of Synchronizing Words for Completely Reachable General Circular Automata}% and Automata over Binary Alphabets}
\title{State Complexity of the Set of Synchronizing Words for Circular Automata and Automata over Binary Alphabets}
\titlerunning{The Set of Synchronizing Words for Circular and Binary Automata}

%
%\titlerunning{Abbreviated paper title}
% If the paper title is too long for the running head, you can set
% an abbreviated paper title here
%
\author{Stefan Hoffmann\orcidID{0000-0002-7866-075X}}
\authorrunning{S. Hoffmann}
% First names are abbreviated in the running head.
% If there are more than two authors, 'et al.' is used.
%
\institute{Informatikwissenschaften, FB IV, 
  Universit\"at Trier,  Universitätsring 15, 54296~Trier, Germany, 
  \email{hoffmanns@informatik.uni-trier.de}}
\maketitle              % typeset the header of the contribution
\begin{abstract}

  Most slowly synchronizing automata over binary alphabets
  are circular, i.e.,
  containing a letter permuting the states in a single cycle, and their set of synchronizing words has maximal state complexity,
  which also implies complete reachability.
  Here, we take a closer look at generalized circular and completely reachable automata.
  We derive that over a binary alphabet every completely
  reachable automaton must be circular, a consequence
  of a structural result stating  that completely
  reachable automata over % als corollary |Q| > |Sigma| - 1 [-1 da eine permutation dann]
  strictly less letters than states always contain permutational letters.
  % collapsing definieren
  %We state sufficient criteria for % streng genommen nicht binary, muss nur sowas drin sein
  %synchronizing general circular automata
  %with a letter %collapsing precisely two states % circular dubuc zitieren
  %mapping precisely two states to a single state % eigentlich klar, dass letter of rank 1 haben muss...
  % also prop dass unterscheiden, und in theorem dann nichtmehr
  %to deduce that the set of synchronizing words has maximal state complexity.
  We state sufficient conditions for the state complexity
  of the set of synchronizing words of a generalized circular automaton
  to be maximal.
  We apply our main criteria to the family $\mathscr K_n$ 
  of automata that was previously
  only conjectured to have this property. %Hence, we solve an open problem.
  % For our investigations we state general criteria that yi %
  % das paar ergebnis, aber vielleicht in perm grop paper eher hervorheben
%   We state a general structure theorem, yielding that completely
%   reachable automata over % als corollary |Q| > |Sigma| - 1 [-1 da eine permutation dann]
%   stricly less letters than states always contain permutational letters,
%   in particular over binary alphabets these automata are circular
  
  %
  % aperidoc 1-contracting sind doch die binären mit zyklus und rk n -1
  % siehe don paper, also gilt doch äquvalenz, und damit compl reachable in P entscheidbar
  % zu beiden mit look-up table image untersuchen, bzw. bahn berechnen.
 \keywords{finite automata \and synchronization \and completely reachable automata 
 \and state complexity \and set of synchronizing words} 
\end{abstract}

\section{Introduction}

A deterministic semi-automaton is synchronizing if it admits a reset word, i.e., a word which leads to some definite
state, regardless of the starting state. This notion has a wide range of applications, from software testing, circuit synthesis, communication engineering and the like, see~\cite{San2005,Vol2008}.  The famous \v{C}ern\'y conjecture~\cite{Cer64}
states that a minimal synchronizing word has length at most $(n-1)^2$
for an $n$ state automaton.
We refer to the mentioned survey articles~\cite{San2005,Vol2008} for details. 
%Due to its importance, the notion of synchronization has undergone a range of generalizations and variations
%for other automata models.
An automaton is completely reachable, if for each subset of states 
we can find a word which maps the whole state set onto this subset.
%This is a stronger requirement than synchronizability,
This is a generalization of synchronizability,
as a synchronizing word maps the whole state set to a singleton set.
%and surely we can find synchronizing automata which are not completely reachable.
The class of completely reachable automata was formally introduced in~\cite{DBLP:conf/dcfs/BondarV16},
but already in~\cite{DBLP:journals/combinatorics/Don16,DBLP:journals/corr/Maslennikova14}
such automata appear in the results.
The time complexity of deciding if a given automaton is completely reachable
is unknown. A sufficient and necessary criterion for complete reachability of a given automaton
in terms of graphs and their connectivity
is known~\cite{DBLP:conf/dlt/BondarV18}, but it is not known if these graphs
could be constructed in polynomial time.
%A special case, the graph constructed by considering words that collaps precisely two states,
%which gives a sufficient criterion for complete reachability~\cite{DBLP:conf/dcfs/BondarV16},
%is shown to be constructible in polynomial time~\cite{DBLP:journals/jalc/GonzeJ19}.
A special case of the general graph construction, which gives a sufficient
criterion for complete reachability~\cite{DBLP:conf/dcfs/BondarV16},
is known to be constructible in polynomial time~\cite{DBLP:journals/jalc/GonzeJ19}.
The size of a minimal automaton accepting a given regular
language is called the state complexity of that language.
The set of synchronizing words of a given automaton is a regular ideal language
whose state complexity is at most exponential in the size of the original 
automaton~\cite{DBLP:journals/corr/Maslennikova14,DBLP:journals/ijfcs/Maslennikova19}. 
The \v{C}ern\'y family of automata~\cite{Cer64}, a family of automata
yielding the lower bound $(n-1)^2$ for the length of shortest synchronizing words,
is completely reachable, but also the set of synchronizing words has maximal
state complexity~\cite{DBLP:journals/corr/Maslennikova14,DBLP:journals/ijfcs/Maslennikova19}. These properties are shared by many families
of automata that are also slowly synchronizing~\cite{DBLP:conf/mfcs/AnanichevGV10,AnaVolGus2013,DBLP:journals/corr/Maslennikova14,DBLP:journals/ijfcs/Maslennikova19}, i.e., those for which a shortest reset word is close to the 
\Cerny~bound.

% We give a new characterization of primitive groups which relates these groups
% to completely reachable automata. 
% Furthermore, we investigate the relation between the state complexity of
% the set of synchronizing words and complete reachability. 
% We state a general criterion and a polynomial time algorithm 
% to decide for a given completely reachable automaton if the set of synchronizing
% words has maximal state complexity.
% We then take a closer look at automata over a binary automata and
% give results, that together with a result by~\cite{DBLP:journals/combinatorics/Don16},
% allows the construction of many new classes of automata
% which are completely reachable and such that the state
% complexity of the set of synchronizing words is maximal.
% Our results also encompass a class from~\cite{DBLP:journals/ijfcs/Maslennikova19} for which this was only conjectured,
% hence settling this problem.

% Motivation noch dazu?
\subsubsection{Outline and Contribution:} In Section~\ref{sec::preliminaries} we give definitions
and state known results. Then, in Section~\ref{sec:gen_results}, we give a general criterion for completely
reachable automata to deduce that the set of synchronizing words has maximal state complexity.
We also state a structural result by which we can deduce that in completely reachable automata,
where the number of letters is strictly less than the number of states, we must have permutational letters generating
a non-trivial permutation group. In Section~\ref{sec:gen_circ_aut}, we state sufficient conditions
for generalized circular automata to deduce that their set of synchronizing words
has maximal state complexity.
In Section~\ref{sec:binary_case}, we apply the results from Section~\ref{sec:gen_results}
to deduce
that completely reachable automata over binary alphabets must be circular
and must have a letter mapping precisely two states to a single state. 
Also, with the results from Section~\ref{sec:gen_circ_aut}, we show that the family $\mathscr K_n$, $n > 5$ odd, from~\cite{DBLP:journals/ijfcs/Maslennikova19}
gives completely reachable automata such that the set of synchronizing words has maximal
state complexity. This solves an open problem from~\cite{DBLP:journals/ijfcs/Maslennikova19},
where this was only conjectured. 
The \v{C}ern\'y family
of automata~\cite{Cer64,Vol2008}, the first given family yielding the lower bound $(n-1)^2$ for the length of synchronizing words,
is completely reachable and its set of synchronizing words has maximal state 
complexity~\cite{DBLP:journals/corr/Maslennikova14,DBLP:journals/ijfcs/Maslennikova19}.
These properties are also shared by a wealth of different slowly synchronizing
automata~\cite{DBLP:conf/mfcs/AnanichevGV10,AnaVolGus2013,DBLP:journals/corr/Maslennikova14,DBLP:journals/ijfcs/Maslennikova19}.
Our criteria apply to all the automata mentioned in this previous work.
However, we give an example showing
that our stated conditions are only sufficient, but not necessary.

\section{Preliminaries and Definitions}
\label{sec::preliminaries}

% todo noch überall \emph{} einfügen

%\todo{keine getrennten subsections?}
\subsubsection{General Notions:} Let $\Sigma = \{a_1,\ldots, a_k\}$ be a finite set of \emph{symbols}
(also called \emph{letters}), 
%\footnote{If not otherwise stated we assume that our alphabet has
%the form 
% $\Sigma = \{a_1, \ldots, a_k\}$ and $k$ denotes the number
% of symbols.},
 called an \emph{alphabet}. The set $\Sigma^{\ast}$ denotes
the set of all finite sequences, i.e., of all words or strings. The finite sequence of length zero,
or the \emph{empty word}, is denoted by $\varepsilon$. We set $\Sigma^+ = \Sigma^* \setminus \{\varepsilon\}$.
%We call a word $u \in \Sigma^*$ a \emph{prefix} of another word $v \in \Sigma^*$,
%if there exists $x \in \Sigma^*$ such that $v = ux$.
For a given word $w \in \Sigma^*$, we denote by $|w|$
its \emph{length}.
%, and for $a \in \Sigma$ by $|w|_a$ the number of occurrences of the symbol $a$
%in $w$. 
The subsets of $\Sigma^{\ast}$
are called \emph{languages}. 
%With $\mathbb N_0 = \{ 0,1,2,\ldots \}$
%we denote the set of natural numbers, including zero, and $\mathbb N = \mathbb N_0 \setminus \{0\}$.
%By $\mathbb Z = \{\ldots, -3,-2,-1,0,1,2,3,\ldots \}$ we denote the integers.
For $n > 0$ we set $[n] = \{0,\ldots, n-1\}$ and $[0] = \emptyset$.
If $a,b \in \mathbb Z$ and $b \ne 0$, by $a\bmod b$ we denote the unique number $0 \le r < |b|$
with $a = qb + r$ for some $q \in \mathbb Z$.
For some set $X$ by $\mathcal P(X)$ we denote the \emph{power set} of $X$, i.e,
the set of all subsets of $X$. Every function $f : X \to Y$
induces a function $\hat f : \mathcal P(X) \to \mathcal P(Y)$
by setting $\hat f(Z) := \{ f(z) \mid z \in Z \}$.
Here, we will denote this extension also by $f$.
Let $k \ge 1$. A \emph{$k$-subset} $Y \subseteq X$ is a finite set of cardinality~$k$.
% A $1$-set is also called a \emph{singleton set}.
% For functions $f : A \to B$ and $g : B \to C$, the \emph{functional composition}
% $gf : A \to C$ is the function $(gf)(x) = g(f(x))$, i.e., the function on the right is applied first\footnote{In group theory usually the other convention is adopted, but we stick to the convention most often seen in formal
% language theory.}.

% A bijective function $f : X \to X$ is called a \emph{permutation (of $X$)}.
% If $X$ is finite, the smallest number $n \ge 0$ such that
% $f^n(x) = x$ for each $x \in X$ is called the \emph{order of the permutation}.
% Every permutation could be written uniquely as a composition of disjoint 
% cycles~\cite{cameron_1999}, then the order equals the least common multiple of all the resulting cycle
% lengths.
% By $\pi_1 : X\times Y \to X$ and $\pi_2 : X \times Y \to Y$ we denote
% the projection maps onto the first and second component, $\pi_1(x,y) = x$ and $\pi_2(x,y) = y$.
% If $a,b\in \mathbb N_0$ with $b > 0$, we denote by $a \bmod b$ the unique number $0 \le r < b$
% such that $a = bn + r$ for some $n \ge 0$.
%For $n \in \mathbb N_0$ we set $[n] := \{ k \in \mathbb N_0 : 0 \le k < n \}$.
%Let $M \subseteq \mathbb N_0$ be some \emph{finite} set. 
%By $\max M$ we denote the maximal element in $M$
%with respect to the usual order, and we set $\max \emptyset = 0$.
% Also for finite $M \subseteq \mathbb N_0 \setminus\{0\}$, i.e., $M$ is finite without zero in it,
% by $\lcm M$ we denote the
% least common multiple of the numbers in $M$, and set $\lcm \emptyset = 0$.
\subsubsection{Automata-Theoretic Notions:} A \emph{finite, deterministic} and \emph{complete} automaton will
be denoted by $\mathscr A = (\Sigma, Q, \delta, s_0, F)$
with $\delta : Q \times \Sigma \to Q$ the state transition function, $Q$ a finite set of states, $s_0 \in Q$
the start state and $F \subseteq Q$ the set of final states. 
The properties of being deterministic and complete are implied by the definition of $\delta$
as a total function.
The transition function $\delta : Q \times \Sigma \to Q$
could be extended to a transition function on words $\delta^{\ast} : Q \times \Sigma^{\ast} \to Q$
by setting $\delta^{\ast}(s, \varepsilon) := s$ and $\delta^{\ast}(s, wa) := \delta(\delta^{\ast}(s, w), a)$
for $s \in Q$, $a \in \Sigma$ and $w \in \Sigma^{\ast}$. In the remainder we drop
the distinction between both functions and will also denote this extension by $\delta$.
For $S \subseteq Q$ and $w \in \Sigma^*$, we write $\delta(S, w) = \{ \delta(s, w) \mid s \in S \}$
and $\delta^{-1}(S, w) = \{ q \in Q \mid \delta(q, w) \in S \}$.
%Here we do not use
%other automata models.
% hier smei-auto einführen! todo
% A \emph{semi-automaton} $\mathcal A = (\Sigma, Q, \delta)$ 
% is like an automaton, but without a designated start state and without a set of final states.
% All (semi-)automata considered in this paper will be complete and deterministic. 
%and initially connected, the last notion meaning
%for every $s \in S$ there exists some $w \in \Sigma^{\ast}$ such that $\delta(s_0, w) = s$.
%Let $\mathcal A = (\Sigma, Q, \delta, s_0, F)$ 
% be an automaton, then a trap state $t \in Q$ 
% is a state such that $\delta(t,x) = t$ for each $x \in \Sigma$.
The \emph{language accepted} by $\mathscr A = (\Sigma, S, \delta, s_0, F)$ is
$
 L(\mathscr A) = \{ w \in \Sigma^{\ast} \mid \delta(s_0, w) \in F \}.
$
A language $L \subseteq \Sigma^{\ast}$ is called \emph{regular} if $L = L(\mathscr A)$
for some finite automaton $\mathscr A$. 
For a language $L \subseteq \Sigma^{\ast}$ and $u, v \in \Sigma^*$
we define the \emph{Nerode right-congruence} 
with respect to $L$ by $u \equiv_L v$ if and only if
$
 \forall x \in \Sigma : ux \in L \leftrightarrow vx \in L.
$
The equivalence class for some $w \in \Sigma^{\ast}$
is denoted by $[w]_{\equiv L} := \{ x \in \Sigma^{\ast} \mid x \equiv_L w \}$.
A language is regular if and only if the above right-congruence has finite index, and it could
be used to define the minimal deterministic automaton
$
 \mathscr A_L = (\Sigma, Q, \delta, [\varepsilon]_{\equiv_L}, F)
$
with $Q := \{ [w]_{\equiv_L} \mid w \in \Sigma^{\ast} \}$, $\delta([w]_{\equiv_L}, a) := [wa]_{\equiv_L}$
for $a \in \Sigma$, $w \in \Sigma^{\ast}$ and $F := \{ [w]_{\equiv_L} \mid w \in L \}$.
It is indeed the smallest automaton accepting $L$ in terms of the number of states, and we 
will refer to this construction as the minimal automaton~\cite{HopUll79} of $L$.
The \emph{state complexity} of a regular language
is defined as the number of Nerode right-congruence classes.
We will denote this number by $\stc(L)$.
%For $u \in \Sigma^*$ and $L \subseteq \Sigma^*$, the \emph{quotient}
%of $L$ by $u$ is the set $u^{-1}L = \{ v \in \Sigma^* \mid uv \in L \}$.
%If $u,v \in \Sigma^*$, we have $u \equiv_L v$ if and only if $u^{-1}L = v^{-1}L$.
%Hence, the quotients could be used as representatives of the Nerode equivalence classes.
%Indeed, the automaton $\mathcal A = (\Sigma, Q, \delta, L, F)$
%with $Q = \{ u^{-1}L \mid u \in \Sigma^* \}$, $F = \{ u^{-1}L \mid u \in \Sigma^*, \varepsilon \in u^{-1}L \}$
%and $\delta(u^{-1}L, a) = (ua)^{-1}L$
%is isomorphic to the minimal automaton of $L$.
%Hence, $\stc(L) = |\{ u^{-1}L \mid u \in \Sigma^* \}|$ for regular $L \subseteq \Sigma^*$.
Let $\mathscr A = (\Sigma, Q, \delta, s_0, F)$ be an automaton.
A state $q \in Q$ is \emph{reachable}, if $q = \delta(s_0, u)$
for some $u \in \Sigma^*$. We also say that a state $q$ is reachable from a state $q'$
if $q = \delta(q', u)$ for some $u \in \Sigma^*$.
Two states $q, q'$ are \emph{distinguishable},
if there exists $u \in \Sigma^*$ such that either $\delta(q, u) \in F$ and $\delta(q', u) \notin F$
or $\delta(q, u) \notin F$ and $\delta(q', u) \in F$.
An automaton for a regular language is isomorphic to the minimal automaton
if and only if all states are reachable and distinguishable~\cite{HopUll79}.
A \emph{semi-automaton} $\mathscr A = (\Sigma, Q, \delta)$
is like an ordinary automaton, but without a distinguished start state and without a set of final 
states. Sometimes we will also call a semi-automaton simply an automaton if the context
makes it clear what is meant. Also, definitions without explicit reference
to a start state and a set of final states are also valid for semi-automata.
Let $\mathscr A = (\Sigma, Q, \delta)$ be a finite semi-automaton.
A word $w \in \Sigma^*$ is called \emph{synchronizing} if $\delta(q, w) = \delta(q', w)$
for all $q, q' \in Q$, or equivalently $|\delta(Q, w)| = 1$.
Set $\Syn(\mathscr A) = \{ w \in \Sigma^* \mid |\delta(Q, w)| = 1 \}$.
The \emph{power automaton (for synchronizing words)}
associated to $\mathscr A$ is $\mathcal P_{\mathscr A} = (\Sigma, \mathcal P(Q), \delta, Q, F)$
with start state $Q$, final states $F = \{ \{q \} \mid q\in Q \}$ and the transition function of $\mathcal P_{\mathscr A}$
is the transition function of $\mathscr A$, but applied to subsets of states.
Then, as observed in~\cite{Starke66a},
the automaton $\mathcal P_{\mathscr A}$ accepts the set of synchronizing words,
i.e., $L(\mathcal P_{\mathscr A}) = \Syn(\mathscr A)$.
As for $\{q\} \in F$, we also have $\delta(\{q\}, x) \in F$ for each $x \in \Sigma^*$,
the states in $F$ could all be merged to a single state to get an accepting automaton
for $\Syn(\mathscr A)$.
Also, the empty set is not reachable from $Q$. Hence $\stc(\Syn(\mathscr A)) \le 2^{|Q|} - |Q|$ 
and this bound is sharp~\cite{DBLP:journals/corr/Maslennikova14,DBLP:journals/ijfcs/Maslennikova19}.
We call $\mathscr A$ \emph{completely reachable} if for any non-empty
$S \subseteq Q$ there exists a word $w \in \Sigma^*$ with $\delta(Q, w) = S$, i.e.,
in the power automaton, every state is reachable from the start state. When
we say a \emph{subset of states} in $\mathscr A$ is \emph{reachable}, we mean reachability in $\mathcal P_{\mathscr A}$.
The state complexity of $\Syn(\mathscr A)$ is maximal, i.e., $\stc(\Syn(\mathscr A)) = 2^{|Q|} - |Q|$,
if and only if at least one singleton subset of $Q$
and all subsets $S \subseteq Q$ with $|S| \ge 2$ are reachable, and all non-singleton subsets are distinguishable in $\mathcal P_{\mathscr A}$.
For \emph{strongly connected automata}, i.e., those 
 for which every state is reachable from every other state,
%for which any two states are reachable
%from each other, 
%the state complexity of $\Syn(\mathscr A)$
is maximal iff $\mathscr A$ is completely reachable and all $S \subseteq Q$ with $|S| \ge 2$
are distinguishable %\footnote{The singleton are distinguishable
%from all subsets with at least two distinct elements.} 
in $\mathcal P_{\mathscr A}$. % (note here $F = \{ \{q\} \mid q \in Q \}$, hence $|\{\{q\},S\}\cap F| = 1$). %, as the singleton set are precisely the final states these are always distinguishable
%from every set with at least two states.
A \emph{permutation} on a finite set $Q$ (which here will always be the set of states of some automaton)
is a bijective function,
a subset of permutations closed under concatenation (and function inversion, but this is implied in the finite case)
is called a \emph{permutation group}.\todo{begriff degree nutze ich nicht.}
The \emph{orbit} of an element from $Q$ under a given permutation group on $Q$
is the sets of all elements to which this element could be mapped by elements from the permutation group.\todo{formaler?}
A permutation group with a single orbit, i.e., every element could be mapped to any other,
is called \emph{transitive}.
A semi-automaton $\mathscr A = (\Sigma, Q, \delta)$ is called \emph{circular},
if some letter acts as a cyclic permutation on all states.
This family\todo{family oder class?} of automata was one of the first inspected with respect to the \Cerny-conjecture~\cite{DBLP:conf/icalp/Pin78}, and the conjecture was finally confirmed for this family~\cite{DBLP:journals/ita/Dubuc96,DBLP:journals/ita/Dubuc98}.
% orbit definieren
A semi-automaton $\mathscr A = (\Sigma, Q, \delta)$ is called \emph{generalized circular},
if some word acts as a cyclic permutation on all states\footnote{The circular automata are a proper subfamily of the 
generalized circular automata,
as shown by $\mathscr A = (\{a,b\}, [3], \delta)$
with $\delta(0, a) = 1, \delta(1, a) = 0, \delta(2,a) = 2$
and $\delta(0, b) = 0, \delta(1, b) = 2, \delta(2, b) = 1$. The word $ba$
cyclically permutes the states.}.
Let $\mathscr A = (\Sigma, Q, \delta)$ be an automaton and
for $w \in \Sigma^*$ define $\delta_w : Q \to Q$
by $\delta_w(q) = \delta(q,w)$ for all $q \in Q$.
Then, we can associate with $\mathscr A$ the \emph{transformation monoid}
of the automaton $\mathcal T_{\mathscr A} = \{ \delta_w \mid w \in \Sigma^* \}$.
The \emph{rank} of a map $f : Q \to Q$ on a finite set $Q$ is the cardinality of its image.
For a given automaton, seeing a word as a transformation of its state set, the rank of the word
is the rank of this transformation.  % rank nur auf finite set
A \emph{permutational letter} is a letter of full rank, i.e., a letter inducing a permutation on the states.

\subsubsection{Known Results:} We will need the following result from~\cite{DBLP:journals/combinatorics/Don16}
to deduce complete reachability of some automata families we consider.

% notationen festlegen und sprechweisen, todo
\begin{proposition}[Don~\cite{DBLP:journals/combinatorics/Don16}]
\label{prop:don_compl_reach}
 Let $\mathscr A = (\Sigma, Q, \delta)$
 be a finite circular automaton with $n$ states,
 where $b$ induces a cyclic permutation of the states.
 Suppose we have another letter $a \in \Sigma$
 of rank $n-1$ and
 choose $s,t \in Q$ and $0 < d < |Q|$ such that $\delta(Q, a) = Q \setminus \{s\}$, 
 $|\delta^{-1}(t, a)| = 2$ and $\delta(s, b^d) = t$.
 If $d$ and $n$ are coprime, then
%  \begin{enumerate}
%  %\item $\mathscr A$ is synchronizing. % Naja, eigentlich impliziert 3 die oberen beiden, also weglassen?
% 
%  %\item The shortest synchronizing word of $\mathscr A$ has length at most $(n-1)^2$.
% 
%  \item For every non-empty set $S \subseteq Q$ of size $k$, there exists
%  a word $w_S$ of length at most $n(n-k)$ such that $\delta(Q, w_S) = S$.
%  \end{enumerate}
 for every non-empty set $S \subseteq Q$ of size $k$, there exists
 a word $w_S$ of length at most $n(n-k)$ such that $\delta(Q, w_S) = S$.
\end{proposition}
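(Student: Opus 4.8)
The plan is to proceed by induction on $n - k$, peeling off one state at a time. The core claim I would establish is: for every nonempty $S \subseteq Q$ with $|S| = k < n$ there is a set $S'$ with $|S'| = k+1$ and a word $v$ with $|v| \le n$ such that $\delta(S', v) = S$. Granting this, the proposition follows by downward induction: for $k = n$ take $w_Q = \varepsilon$, and for $k < n$ set $w_S = w_{S'} v$, which has length at most $n(n - k - 1) + n = n(n-k)$ and satisfies $\delta(Q, w_S) = \delta(S', v) = S$, using the induction hypothesis applied to the $(k+1)$-set $S'$.

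To build $S'$ and $v$ I would only use words of the form $v = a b^j$ with $0 \le j \le n - 1$, so that $|v| = 1 + j \le n$. Since $b$ acts as an $n$-cycle, the map $f \colon j \mapsto \delta(s, b^j)$ is a bijection from $\{0, \dots, n-1\}$ onto $Q$, and for each $j$ there is a unique set $Y_j$ with $\delta(Y_j, b^j) = S$; note $|Y_j| = k$. Because $a$ has rank $n - 1$, its image is $Q \setminus \{s\}$, and a counting argument (the $n-1$ preimage classes partition $Q$ into $n$ states, each class nonempty, one of size $2$) shows $t$ is the only state with two $a$-preimages, every state of $Q \setminus \{s, t\}$ having exactly one. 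Hence, if $s \notin Y_j$ and $t \in Y_j$, the full preimage $S' := \delta^{-1}(Y_j, a)$ satisfies $\delta(S', a) = Y_j$ (using $s \notin Y_j$, so no point of $Y_j$ is missed) and $|S'| = |Y_j| + 1 = k + 1$ (using $t \in Y_j$), whence $\delta(S', v) = \delta(\delta(S', a), b^j) = \delta(Y_j, b^j) = S$, as wanted. So it remains to find $j \in \{0, \dots, n-1\}$ with $s \notin Y_j$ and $t \in Y_j$. Since $\delta_b$ is bijective, $s \in Y_j$ iff $f(j) = \delta(s, b^j) \in S$, and $t \in Y_j$ iff $\delta(t, b^j) = \delta(s, b^{d+j}) = f((d+j) \bmod n) \in S$; thus I need $j$ with $f(j) \notin S$ and $f((j + d) \bmod n) \in S$.

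This last step is where the hypothesis $\gcd(d,n) = 1$ enters, and I expect it to be the only genuine obstacle, the rest being bookkeeping with preimages under $a$ and shifts under $b$. Put $J = \{ j : f(j) \in S \} \subseteq \{0, \dots, n-1\}$, a nonempty proper subset since $1 \le k \le n-1$. If there were no $j$ with $j \notin J$ but $(j + d) \bmod n \in J$, then the complement of $J$ would be closed under the map $j \mapsto (j + d) \bmod n$; but when $\gcd(d, n) = 1$ this map is a single $n$-cycle on $\{0, \dots, n-1\}$, whose only invariant subsets are $\emptyset$ and the whole set, contradicting that the complement of $J$ is nonempty and proper. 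This yields the desired $j$, completes the core claim, and hence the proposition. In the extreme case $k = n - 1$ the construction specializes to the familiar observation that $\delta(Q, a b^j) = Q \setminus \{f(j)\}$ ranges over all $(n-1)$-subsets of $Q$.
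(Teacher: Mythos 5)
The paper does not prove this proposition at all---it is imported verbatim as a known result from Don's paper, so there is no in-paper argument to compare against. Your proof is correct and is essentially the standard one found in the cited source: expand $S$ to a $(k+1)$-set by a word $ab^{j}$ of length at most $n$, using the fact that $t$ is the unique doubly-covered state and that $\gcd(d,n)=1$ forces the existence of a shift $j$ with $f(j)\notin S$ and $f((j+d)\bmod n)\in S$; all the steps (the counting of $a$-preimages, the bijectivity of $Y_j\mapsto S$, the orbit argument on $J^{c}$, and the length bookkeeping $n(n-k-1)+n=n(n-k)$) check out.
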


% singleton set defineiren

%\section{Characterizing Primitive and $2$-Homogenous Permutation Groups}

\section{General Results on the State Complexity of $\Syn(\mathscr A)$}
\label{sec:gen_results}

% to deduce satz ans ende ohne komma?
Our first result states that for completely reachable automata,
to deduce that the set of synchronizing words has maximal state complexity,
we only need to show distinguishability for those subsets of states
with precisely two elements. 

\begin{lemma}[Hoffmann~\cite{DBLP:conf/lata21}]
\label{lem:compl_reach_implies_max_Syn_2sets}
 Let $\mathscr A = (\Sigma, Q, \delta)$ be a completely reachable semi-automaton
 with $n$ states. Then, $\stc(Syn(\mathscr A)) = 2^n - n$
 if and only if all $2$-sets of states are pairwise distinguishable in $\mathcal P_{\mathscr A}$.
\end{lemma}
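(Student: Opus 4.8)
The overall strategy is to reduce the characterization of maximal state complexity of $\Syn(\mathscr A)$, already recorded in the preliminaries, to the single condition on $2$-sets. Recall the standing fact that for a completely reachable (hence in particular synchronizing) automaton, $\stc(\Syn(\mathscr A)) = 2^n - n$ if and only if all subsets $S \subseteq Q$ with $|S| \ge 2$ are reachable in $\mathcal P_{\mathscr A}$ \emph{and} pairwise distinguishable in $\mathcal P_{\mathscr A}$. Complete reachability gives reachability of every non-empty $S$ for free, and it also gives reachability of at least one singleton (because the automaton is synchronizing). So the forward direction is immediate: if $\stc(\Syn(\mathscr A)) = 2^n - n$, all subsets of size $\ge 2$ are distinguishable, in particular all $2$-sets. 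The content is in the converse: assuming all $2$-sets are distinguishable, deduce that \emph{all} subsets of size $\ge 2$ are pairwise distinguishable.

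First I would spell out what distinguishability means in $\mathcal P_{\mathscr A}$. Two subsets $S, T \subseteq Q$ are distinguishable iff there is a word $w$ with exactly one of $\delta(S,w), \delta(T,w)$ a singleton. Equivalently, since $\varepsilon$ never separates the merged accepting state and $\Syn$ is an ideal, $S$ and $T$ are distinguishable iff $\{\, w : |\delta(S,w)| = 1 \,\} \ne \{\, w : |\delta(T,w)| = 1 \,\}$, i.e., there is a word synchronizing one of the two sets but not the other. The key observation is the following monotonicity: if $S \subseteq S'$ then every word synchronizing $S'$ also synchronizes $S$, so the "synchronizing language" of $S'$ is contained in that of $S$. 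Hence two distinct subsets $S \ne T$ of size $\ge 2$ fail to be distinguishable only if they have exactly the same set of synchronizing words.

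The main step, then, is: given $S \ne T$ with $|S|, |T| \ge 2$, I want to produce a $2$-subset $\{p,q\} \subseteq S$ (or inside $T$) together with a word that "extracts" this pair — more precisely, I use complete reachability together with the $2$-set hypothesis to build a separating word for $S$ and $T$ directly. Concretely: pick a state $r \in S \setminus T$ or $r \in T \setminus S$, say WLOG $r \in S \setminus T$; pick any other state $r' \in T$ (exists since $|T|\ge 2$); by complete reachability there is a word $u$ with $\delta(Q,u) = \{r, r'\}$. Applying $u$ first won't separate $S$ from $T$ directly, so instead I reason on preimages: since $\{r,r'\}$ is a $2$-set it is distinguishable from some other $2$-set, but more usefully, I should use that in a completely reachable automaton every $2$-set is reachable and then argue that the synchronizing language of an arbitrary $S$ of size $\ge 2$ is determined by the intersection (over pairs $\{p,q\}\subseteq S$) of the synchronizing languages of those pairs: a word $w$ synchronizes $S$ iff it synchronizes $\delta(S,v)$ for... — here I would carefully set up the induction on $|S|$, the base case $|S|=2$ being the hypothesis, and the inductive step peeling off one element and using that $\delta(S,w)$ stays of size $\ge 2$ until the moment of synchronization, at which point a suitable $2$-subset already collapsed. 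The heart of the argument is showing that if $S\ne T$ then some pair inside one of them can be "exposed" by a word $w$ after which the image of that set is a $2$-set while the image of the other is either strictly larger or a \emph{different} $2$-set, then invoke the $2$-set hypothesis and compose.

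The step I expect to be the real obstacle is exactly this last reduction: turning "$S \ne T$" into a concrete word $w$ that maps $S$ and $T$ to configurations that the $2$-set hypothesis can tell apart, while controlling that $w$ doesn't prematurely synchronize the set we don't want synchronized. Complete reachability is the tool that rescues this — it lets us realize any desired $2$-set as $\delta(Q, \cdot)$, and then pull back — but making the "pull back" precise (so that $\delta(S,w)$ and $\delta(T,w)$ are the images we want, not merely that such images exist as $\delta(Q,\cdot)$-values) is where the care is needed, presumably via a counting/pigeonhole argument on ranks or via directly tracking a distinguishing word for a well-chosen pair and prepending nothing. I would finish by assembling these pieces into the contrapositive: if some pair of subsets of size $\ge 2$ were indistinguishable, I would extract from them an indistinguishable pair of $2$-sets, contradicting the hypothesis.
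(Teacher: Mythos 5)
Your forward direction is correct and is indeed immediate from the characterization of maximal state complexity recalled in the preliminaries. The problem is the converse, which is the entire content of the lemma, and there your text stops exactly where the proof has to begin. You correctly reduce distinguishability of $S$ and $T$ in $\mathcal P_{\mathscr A}$ to $W(S)\neq W(T)$ for $W(S)=\{w : |\delta(S,w)|=1\}$, and you correctly note the monotonicity and the identity $W(S)=\bigcap_{\{p,q\}\subseteq S}W(\{p,q\})$. But injectivity of $\{p,q\}\mapsto W(\{p,q\})$ on $2$-sets does not formally imply injectivity of $S\mapsto\bigcap_{\{p,q\}\subseteq S}W(\{p,q\})$ on all sets of size at least two --- for an abstract family of languages this implication is simply false --- so the whole burden of the proof is to show how complete reachability forces it. That argument is absent: you announce an induction on $|S|$, a ``peeling off'' of elements and an ``exposing'' of a pair, and then write that the pull-back ``is where the care is needed, presumably via a counting/pigeonhole argument''. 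That is an acknowledgement of the gap, not a proof.

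Moreover, the one concrete move you do propose does not work as stated. Taking $r\in S\setminus T$, $r'\in T$ and a word $u$ with $\delta(Q,u)=\{r,r'\}$ gives $\delta(S,u)\subseteq\{r,r'\}$ and $\delta(T,u)\subseteq\{r,r'\}$; unless exactly one of these images happens to be a singleton (in which case $u$ alone already distinguishes them and none of this machinery is needed), both images are equal --- either both are $\{r,r'\}$ or a common singleton is forced by nothing --- and in the case $\delta(S,u)=\delta(T,u)$ no extension of $u$ can ever separate the two sets. What is missing is a mechanism that, starting from $S\neq T$ with $|S|,|T|\ge 2$, produces a word after which the two images are either distinct $2$-sets (so that the hypothesis applies) or differ in whether they are singletons, while guaranteeing that the images neither coincide nor both collapse prematurely; nothing in your proposal controls this, and this control is precisely where complete reachability and the $2$-set hypothesis must interact. (Note also that the paper itself gives no proof of this lemma --- it is imported verbatim from the cited LATA~2021 paper --- so there is no in-paper argument to compare against; but as a self-contained proof your proposal is incomplete at its only nontrivial step.)
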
 

With the next result we can deduce information about the structure of completely reachable automata
when the alphabet, or more precisely
only the number of letters of rank $n - 1$, is strictly smaller than the number of states. 
Later, for completely reachable automata over binary alphabet, we can deduce
that they must be circular and have to contain a letter of rank $n - 1$.
Note that we formulate it with a weaker condition than full complete reachability, merely
only with reachability of subsets of size $n - 1$.

\begin{propositionrep}
\label{prop:n-1_reachable_iff_transitive_perm_grp} 
 Let $\mathscr A = (\Sigma, Q, \delta)$ be a semi-automaton 
 with $n$ states, $m$ letters of rank $n - 1$
 and $n > m$.
 Then, the following conditions are equivalent:
 \begin{enumerate}
 \item every subset of size $n - 1$ is reachable,
 \item there exists at least one letter of rank $n - 1$
  and a subset of letters generating a non-trivial permutation group 
  such that every state is in the same orbit as some 
  state not in the image of a rank $n - 1$ letter.
  In particlar, we have at most $m$ orbits.
 \end{enumerate}
\end{propositionrep}
\begin{proof}

   Let us denote the rank of a function $f : [n] \to [n]$ by $\operatorname{rk}(f)$. Note that
 \begin{equation}\label{eqn:rank}
     \rank(fg) \le \min\{ \rank(f), \rank(g) \}
 \end{equation}
 for functions $f, g : [n] \to [n]$.
 Set $n = |Q|$. First, assume that every subset of size $n-1$ is reachable.
 Then some letter, say $a_1$, must have rank $|Q| - 1$.
 For, if not, then any word $w \in \Sigma^*$ 
 has rank $|Q|$, if it is composed out of permutations, or a rank strictly smaller than $|Q| - 1$ , if some letter
 of rank strictly smaller than $|Q| - 1$ appears in it, by Equation~\eqref{eqn:rank}.
 Let $a_1, \ldots, a_m$ be all those letters of rank $n - 1$
 and let $G$ be the subgroup generated by all permutational letters.
 We will employ a slight abuse of notation in the following, we will write,
 for a state $q \in Q$ and $g \in G$,
 simply $\delta(q, g)$ when we actually mean to choose a word $w$ representing
 $g$ and applying this to $q$, i.e., $\delta(q, w)$.
 This poses no problem and is well-defined, 
 as $G$ is defined as a subset of the transformation monoid, which identifies
 words precisely if they define the same transformation. %, as the mapping
 %from words to elements in the transformation monoid of the automaton is an epimorphism.
 Choose, for each $i \in \{1,\ldots,m\}$,
 a state $s_i \notin \delta(Q, a_i)$, i.e., $Q= \delta(Q, a_i) \cup \{s_i\}$.
 Note that, for $A \subseteq Q$, if $|\delta(A, a_i)| = n - 1$, $i \in \{1,\ldots,m\}$,
 then $\delta(A, a_i) = \delta(Q, a_i)$.
 This yields that for any reachable subset
 $B \subseteq Q$ of size $n - 1$
 we could write $B = \delta(Q, a_i g)$ with $g \in G$ and some $i \in \{1,\ldots,m\}$.
 Hence, 
 $B = Q \setminus \{ \delta(s_i, g) \}$.
 Let $q \in Q$. Then $Q\setminus \{q\}$
 is reachable by assumption. Hence, as the elements in $G$ are permutations, $q = \delta(s_i, g)$
 for some $g \in G$. 
 This yields
 $$ 
  Q = \delta(s_1, G) \cup \ldots \cup \delta(s_m, G)
 $$
 If $G$ is trivial, then $\delta(s_1, G) = \{s_1\}$.
 So, as $n > m$, $G$ must be non-trivial, i.e.,
 we must have some letter of full rank.

 Now, conversely assume the letters of rank $n - 1$
 are $a_1, \ldots, a_m$ for some $m > 0$
 with $\delta(Q, a_i) \cup \{s_i\} = Q$
 for states $s_i \in Q$, $i \in \{1,\ldots,m\}$
 such that
 $$
  Q = \delta(s_1, G) \cup \ldots \cup \delta(s_m, G).
 $$
 Choose some $q \in Q$ and
 suppose $q = \delta(s_i, g)$ for $i \in \{1,\ldots,m\}$
 and $g \in G$.
 Then, as $g$ is a permutation, $Q \setminus \{ q \} = \delta(Q \setminus\{s_i\}, g)$
 and as $Q \setminus \{s_i\} = \delta(Q, a_i)$
 we find $Q \setminus \{q\} = \delta(Q, a_ig)$. ~\qed
\end{proof}

\begin{remark}
 The condition $m < n$ cannot be omitted in Proposition~\ref{prop:n-1_reachable_iff_transitive_perm_grp}.
 For example, let $\mathscr A = (\Sigma, Q, \delta)$
 with $Q = [n]$ and $\Sigma = \{a_1, \ldots, a_n, b_1, \ldots, b_n\}$ be
 such that for $i \in \{1,\ldots,n\}$ we have
 $Q \setminus \{i\} = \delta(Q, a_i)$
 and $a_i$ cyclically permutes $Q \setminus \{i\}$.
 Furthermore, let $b_i$ map some fixed state $q_i \in Q \setminus \{i\}$
 to $\delta(q_i, a_i)$ and act as the identity
 transformation on the rest.
 Then, in $\mathscr A$, even when only using the alphabet $a_1, \ldots, a_n$
 we reach every subset of size $n - 1$. But with the additional
 letters, $\mathscr A$ is also completely reachable, as the subautomaton
 given by $Q \setminus \{i\}$ and only the letters $a_i$ and $b_i$
 equals the \Cerny-automaton, which is completely reachable~\cite{DBLP:journals/ijfcs/Maslennikova19}.
 Hence, combining these facts gives complete reachability of $\mathscr A$, but
 we have no permutational letters at all.
\end{remark}

With this result, we can derive that a completely reachable
automaton whose alphabet is small enough has to contain a non-trivial permutation group as part
of its transformation monoid. Or more specifically, if we only have a single 
letter of rank $n - 1$, this permutation group must be transitive. 

\begin{corollary}
\label{cor:n-1_reachable_iff_transitive_perm_grp} 
 If $\mathscr A = (\Sigma, Q, \delta)$ is completely reachable
 with only a single non-permutational letter % todo begriff permutaional definieren.
 and $|Q| > 2$, then $\mathcal T_{\mathscr A}$
 contains a transitive permutation group as a submonoid.
\end{corollary}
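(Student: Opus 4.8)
The plan is to deduce the statement directly from Proposition~\ref{prop:n-1_reachable_iff_transitive_perm_grp}. Since $\mathscr A$ is completely reachable, in particular every subset of size $n-1$ is reachable, where $n=|Q|$; hence condition~(1) of that proposition holds, and the whole argument reduces to pinning down the parameter $m$, the number of letters of rank $n-1$, and then reading off condition~(2).

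First I would argue that $m=1$. On the one hand, reachability of an $(n-1)$-subset forces at least one letter of rank $n-1$: by submultiplicativity of the rank, a word built solely from permutational letters has rank $n$, while a word in which some letter of rank strictly below $n-1$ occurs has rank strictly below $n-1$, so neither kind of word can reach a set of exactly $n-1$ states — this is exactly the observation already used inside the proof of Proposition~\ref{prop:n-1_reachable_iff_transitive_perm_grp}. Hence $m\ge 1$. On the other hand, the hypothesis supplies exactly one non-permutational letter, and a permutational letter has rank $n\ne n-1$; therefore $m\le 1$, and so $m=1$. Since $|Q|=n>2$, we have $n>1=m$, so the proposition is applicable.

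Now condition~(2) hands us a subset of (necessarily permutational) letters generating a non-trivial permutation group $G$, with the additional information that $G$ has at most $m=1$ orbit on $Q$. As the $G$-orbits partition the nonempty set $Q$, ``at most one orbit'' means exactly one, i.e.\ $G$ is transitive on $Q$; and being generated by bijections of the finite set $Q$, $G$ is a group contained in $\mathcal T_{\mathscr A}$, which is precisely the asserted transitive permutation group occurring as a submonoid. There is essentially no real obstacle beyond this bookkeeping, since the structural heavy lifting is carried out by the proposition; the only points that need a little care are the exact determination $m=1$ — in particular that complete reachability already forces the presence of a rank-$(n-1)$ letter — and the passage from ``at most $m$ orbits'' with $m=1$ to genuine transitivity, which only uses $Q\ne\emptyset$. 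The hypothesis $|Q|>2$ enters solely to guarantee $n>m$ so that Proposition~\ref{prop:n-1_reachable_iff_transitive_perm_grp} can be invoked.
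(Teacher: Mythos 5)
Your proposal is correct and matches the paper's intended derivation: the corollary is meant to follow directly from Proposition~\ref{prop:n-1_reachable_iff_transitive_perm_grp} by observing that complete reachability forces at least one letter of rank $n-1$, that the single non-permutational letter is therefore the unique such letter (so $m=1$), and that ``at most $m=1$ orbits'' yields transitivity of the non-trivial group generated by the permutational letters inside $\mathcal T_{\mathscr A}$. The bookkeeping you supply (in particular the rank-submultiplicativity argument pinning down $m=1$ and the check that $n>m$) is exactly the reasoning the paper leaves implicit.
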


If we find a transitive permutation group in the 
transformation monoid of some given automaton, then this automaton
is strongly connected. Hence, let us state the following observation
concerning strongly connected automata.

\begin{lemma}[Hoffmann~\cite{DBLP:conf/lata21}] 
\label{lem:SCC_Syn_max_implies_compl_reach}
 Let $\mathscr A = (\Sigma, Q, \delta)$
 be strongly connected. 
 If $\Syn(\mathscr A)$ has maximal state complexity, then
 $\mathscr A$ is completely reachable.
\end{lemma}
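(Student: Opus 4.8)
The plan is to argue by contraposition: assuming $\mathscr A$ is strongly connected but not completely reachable, I want to exhibit two distinct non-singleton subsets of $Q$ that are indistinguishable in $\mathcal P_{\mathscr A}$, or else a non-singleton unreachable subset that is never separated from some other subset, so that $\stc(\Syn(\mathscr A)) < 2^{|Q|} - |Q|$ by the characterization recalled just before the statement. The key structural fact I would use is that, since $\mathscr A$ is strongly connected, for every subset $S \subseteq Q$ and every state $q \in Q$ there is a word taking $q$ into $S$ (indeed into any prescribed singleton), so strong connectivity already gives plenty of "forward" mobility; what can fail is only the ability to hit a subset as the image of the \emph{whole} state set $Q$.

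The main step: let $\mathcal R$ be the collection of reachable subsets of $Q$ in $\mathcal P_{\mathscr A}$ (those of the form $\delta(Q,w)$), together with $Q$ itself. If $\mathscr A$ is not completely reachable, some non-empty $S$ lies outside $\mathcal R$; pick such an $S$ of maximal cardinality, so every strictly larger subset that is reachable-closed-upward behaves well, and in particular $S$ is a proper subset of $Q$ with $|S| \ge 1$. The idea is that $S$ cannot be the image of $Q$ under any word, hence in the power automaton $S$ is unreachable from the start state $Q$; but unreachable states of the minimal automaton underlying $\mathcal P_{\mathscr A}$ (after merging $F$) get deleted, which by itself already shows the state count drops below $2^{|Q|}-|Q|$ --- \emph{provided} $|S| \ge 2$, or provided more than one singleton is forced to be unreachable. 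So the real work is to handle the case $|S|=1$, i.e.\ where the only obstruction is that some singleton $\{q\}$ is unreachable. Here I would use strong connectivity together with the fact that $\Syn(\mathscr A)$ is non-empty (which maximal state complexity certainly forces, since a synchronizing word must exist for the singleton final states to matter): if $\{q\}$ is reachable for some $q$, then by strong connectivity $\{q'\}$ is reachable for every $q'$ (move $q$ to $q'$), so \emph{all} singletons are reachable; thus the only way complete reachability can fail is through some $S$ with $|S| \ge 2$, and we are back in the good case.

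Putting it together: not completely reachable and strongly connected forces some $S$ with $2 \le |S| \le |Q|-1$ to be unreachable in $\mathcal P_{\mathscr A}$. Such an $S$ is a state of $\mathcal P_{\mathscr A}$ that is unreachable from $Q$, hence does not appear in the minimal automaton for $\Syn(\mathscr A)$, so $\stc(\Syn(\mathscr A)) \le 2^{|Q|} - |Q| - 1 < 2^{|Q|}-|Q|$, contradicting maximality. I expect the main obstacle to be exactly the bookkeeping around singletons and the empty set: one has to be careful that "$2^{|Q|}-|Q|$" already discounts the $|Q|$ singletons (they are merged into one accepting sink) and the empty set, so the counting argument only bites when the unreachable witness has size at least two --- which is why the strong-connectivity reduction on singletons is the crux rather than a triviality.
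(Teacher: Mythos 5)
Your argument is correct and is essentially the intended one: the paper imports this lemma from prior work without reproving it, but its preliminaries already record that maximal state complexity forces all subsets of size at least two plus at least one singleton to be reachable, and the only genuine content is then your observation that strong connectivity propagates reachability of one singleton $\{q\}$ to every $\{q'\}$ via $\delta(Q,wu)=\{\delta(q,u)\}$. The detour through contraposition and the ``maximal cardinality'' choice of the unreachable witness are unnecessary, but they do not introduce any gap.
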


Combining Corollary~\ref{cor:n-1_reachable_iff_transitive_perm_grp}
and Lemma~\ref{lem:SCC_Syn_max_implies_compl_reach}
gives the next lemma.

\begin{lemma}
 Let $\mathscr A = (\Sigma, Q, \delta)$
 be completely reachable with only a single non-permutational letter
 and $|Q| > 2$.
 Then, if $\Syn(\mathscr A)$ has maximal state complexity,
 the semi-automaton is completely reachable.
\end{lemma}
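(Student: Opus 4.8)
The plan is to exploit the fact that maximality of $\stc(\Syn(\mathscr A))$ already encodes enough reachability to trigger the structural results of this section. Recall from Section~\ref{sec::preliminaries} that $\stc(\Syn(\mathscr A)) = 2^{n}-n$ (with $n=|Q|$) forces every $S\subseteq Q$ with $|S|\ge 2$ to be reachable from $Q$ in $\mathcal P_{\mathscr A}$; in particular every $(n-1)$-subset of $Q$ is reachable. This is precisely condition~(1) of Proposition~\ref{prop:n-1_reachable_iff_transitive_perm_grp}, provided we can check its hypotheses on the number of letters of rank $n-1$.

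The key intermediate step is to pin down the rank of the unique non-permutational letter, call it $c$. If a word $w$ witnesses a reachable $(n-1)$-subset, i.e.\ $|\delta(Q,w)| = n-1$, then $w$ cannot be a word over the permutational letters only, since such a word induces a permutation of $Q$ and would give $|\delta(Q,w)| = n$; hence $c$ occurs in $w$. Using $\rank(fg)\le\min\{\rank(f),\rank(g)\}$ we get $\rank(c)\ge \rank(\delta_w)= n-1$, and as $c$ is not a permutation, $\rank(c)=n-1$ exactly. Since all letters other than $c$ are permutational and thus of rank $n$, the semi-automaton has exactly one letter of rank $n-1$, so the parameter $m$ of Proposition~\ref{prop:n-1_reachable_iff_transitive_perm_grp} equals $1$, and $n=|Q|>2>1=m$ meets the required inequality $n>m$.

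Now I would apply Proposition~\ref{prop:n-1_reachable_iff_transitive_perm_grp}: from reachability of all $(n-1)$-subsets we obtain that the permutational letters generate a non-trivial permutation group $G\le\mathcal T_{\mathscr A}$ with at most $m=1$ orbit, that is, $G$ is transitive. As noted just before Corollary~\ref{cor:n-1_reachable_iff_transitive_perm_grp}, a transitive permutation group inside $\mathcal T_{\mathscr A}$ makes $\mathscr A$ strongly connected. Finally, with $\mathscr A$ strongly connected and $\Syn(\mathscr A)$ of maximal state complexity, Lemma~\ref{lem:SCC_Syn_max_implies_compl_reach} yields that $\mathscr A$ is completely reachable, which is the assertion; this is exactly the combination of Corollary~\ref{cor:n-1_reachable_iff_transitive_perm_grp} (in the form of the reasoning behind it, via Proposition~\ref{prop:n-1_reachable_iff_transitive_perm_grp}) with Lemma~\ref{lem:SCC_Syn_max_implies_compl_reach}.

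I expect the only genuinely delicate point to be the bridging step between "maximal state complexity" and the input demanded by Proposition~\ref{prop:n-1_reachable_iff_transitive_perm_grp}: one must observe that reachability of the $(n-1)$-subsets alone suffices (full complete reachability, which Corollary~\ref{cor:n-1_reachable_iff_transitive_perm_grp} states as a hypothesis, is not needed at this stage), and one must be careful to justify that the count of rank-$(n-1)$ letters is exactly $1$, which relies on the rank computation above together with the assumption that $c$ is the only non-permutational letter. Everything after that is a direct chaining of the already-established results.
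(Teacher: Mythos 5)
Your proof is correct, and it in fact establishes more than the paper does. As stated, the lemma is logically vacuous: ``completely reachable'' appears both as a hypothesis and as the conclusion, and the paper's own proof (``Combining Corollary~\ref{cor:n-1_reachable_iff_transitive_perm_grp} and Lemma~\ref{lem:SCC_Syn_max_implies_compl_reach}'') exploits exactly that redundancy --- it feeds the complete-reachability \emph{hypothesis} into Corollary~\ref{cor:n-1_reachable_iff_transitive_perm_grp} to obtain a transitive permutation group, hence strong connectivity, and then closes the loop with Lemma~\ref{lem:SCC_Syn_max_implies_compl_reach}. You instead derive strong connectivity without ever touching that hypothesis: from $\stc(\Syn(\mathscr A))=2^n-n$ you extract reachability of all $(n-1)$-subsets, use the rank inequality $\rank(fg)\le\min\{\rank(f),\rank(g)\}$ to pin the unique non-permutational letter at rank exactly $n-1$ (so $m=1<n$), and then invoke Proposition~\ref{prop:n-1_reachable_iff_transitive_perm_grp} directly rather than its corollary to get a transitive group on at most $m=1$ orbit. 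The endgame via Lemma~\ref{lem:SCC_Syn_max_implies_compl_reach} is the same in both arguments; what your route buys is a proof of the statement that was presumably intended (drop ``completely reachable'' from the hypotheses and keep it only in the conclusion), whereas the paper's chain of citations only proves the tautological version. The one bridging step you flag --- that maximal state complexity already yields reachability of every subset of size at least $2$, hence of every $(n-1)$-subset since $n-1\ge 2$ --- is indeed exactly what the preliminaries assert about $\mathcal P_{\mathscr A}$, so no gap remains.
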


As circular automata are strongly connected, the next follows by
Lemma~\ref{lem:SCC_Syn_max_implies_compl_reach}.

\begin{corollary}
\label{cor:max_sync_implies_compl_reach_binary}
 Let $\mathscr A = (\Sigma, Q, \delta)$ be a circular semi-automaton. If $\stc(\Syn(\mathscr A)) = 2^n - n$,
 then $\mathscr A$ is completely reachable.
\end{corollary}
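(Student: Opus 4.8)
The plan is to reduce the statement to Lemma~\ref{lem:SCC_Syn_max_implies_compl_reach} by observing that circular automata are strongly connected. First I would unfold the definition of circularity: there is a letter $b \in \Sigma$ whose induced transformation $\delta_b$ is a cyclic permutation of $Q$, where we write $n = |Q|$. Then for any two states $q, q' \in Q$ the list $q, \delta(q,b), \delta(q,b^2), \ldots, \delta(q,b^{n-1})$ exhausts $Q$, so $\delta(q, b^k) = q'$ for some $0 \le k < n$. Hence every state is reachable from every other state, i.e., $\mathscr A$ is strongly connected.

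With strong connectedness in hand, the hypothesis $\stc(\Syn(\mathscr A)) = 2^n - n$ is exactly the statement that $\Syn(\mathscr A)$ attains the sharp bound $2^{|Q|} - |Q|$ recalled in Section~\ref{sec::preliminaries}, i.e., that $\Syn(\mathscr A)$ has maximal state complexity. So Lemma~\ref{lem:SCC_Syn_max_implies_compl_reach} applies verbatim and gives that $\mathscr A$ is completely reachable, which is the claim.

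There is essentially no obstacle here: the only non-bookkeeping content is the routine remark that a cyclic letter forces strong connectedness, after which the corollary is an immediate specialization of the previously established lemma. The single point worth stating explicitly is that the phrase ``maximal state complexity'' in Lemma~\ref{lem:SCC_Syn_max_implies_compl_reach} refers precisely to the equality $\stc(\Syn(\mathscr A)) = 2^{|Q|} - |Q|$ assumed here, so no extra argument is needed to connect the two formulations.
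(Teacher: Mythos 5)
Your proof is correct and matches the paper's own argument exactly: the paper likewise observes that circularity implies strong connectedness and then invokes Lemma~\ref{lem:SCC_Syn_max_implies_compl_reach}. Your explicit verification that the cyclic letter reaches every state is just the routine unpacking the paper leaves implicit.
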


\section{Generalized Circular Automata}
\label{sec:gen_circ_aut}

Here, Theorem~\ref{prop:bin_max_sc_1} and Proposition~\ref{prop:bin_max_sc}
give sufficient conditions to deduce, for completely reachable circular automata,
that the set of synchronizing words has maximal state complexity.
% Our first lemma is a result of how the letters act
% on automata with at least $n > 2$ many states, if we can reach
% every subset of size $n - 1$. In particular, it applies to completely reachable
% automata for $n > 2$ and also yields that over binary alphabets, if the set of synchronizing words
% has maximal state complexity, then the automaton is completely reachable.
%In 
%Proposition~\ref{prop:bin_max_sc_1} and Proposition~\ref{prop:bin_max_sc},
%we state two sufficient conditions which imply that the set of synchronizing words
%has maximal state complexity for automata having two letters, one of rank $n - 1$
%and the other permuting the states in a single cycle.
Both conditions entail all known cases of automata over a binary alphabets for
which the set of synchronizing words has maximal state 
complexity~\cite{DBLP:journals/corr/Maslennikova14,DBLP:journals/ijfcs/Maslennikova19}.
%Also we solve an open problem posed in~\cite{DBLP:journals/ijfcs/Maslennikova19}
%concerning a family of automata for which this was only conjectured.
However, at the end of this section, we will show that the stated conditions are not necessary.
In Theorem~\ref{prop:bin_max_sc_1}, we do not assume the automaton
to be completely reachable, but only to be circular and to have a letter of
rank $n - 1$ fulfilling a certain condition.
If we also suppose complete reachability, then the theorem
gives that the set of synchronizing words has maximal state complexity.
Also, note that instead of a letter, any word fulfilling
the mentioned condition in Theorem~\ref{prop:bin_max_sc_1}
will work to give the conclusion. %Observe that completely reachable automata must have at least one letter of rank~$n - 1$.
Most of the time, we formulate our results for letters, but in all statements
 the assumptions could be formulated with words instead, as the notions of distinguishablity
 do not depend on the length, but only on the existence of certain words\footnote{For, if we choose a finite number of
 words and build the automaton by identifiying these words with new letters, 
 distinguishability or reachability of states (or subsets of states) of this new automaton is inherited to the original automaton. Hence, all results are also valid when stated with words instead of letters, but otherwise
 the same conditions.}.
 %Hence, our results actually only assume generalized circular automata.
 However, we have a slight focus on automata over binary alphabets later on, and the results
 of Section~\ref{sec:binary_case} will show that completely reachable automata over
 binary alphabets with at least three states are always circular
 and
 every word that cyclically permutes the states is a power of the cyclic permutation.
 So, we formulate our result with letters instead of words for simplicity.
Intuitively, in Theorem~\ref{prop:bin_max_sc_1}, Equation~\eqref{eqn:first_case_apply_a} says
that we can apply the letter $a$ to reduce the distance modulo $n$ on the cycle
given by $b$, or, by Equation~\eqref{eqn:second_case_apply_a},
that we can map to a state having a specific distance, from which
we can then reduce~it. Please see Figure~\ref{fig:theorem}
for a graphical depiction.

\begin{figure}[htb]
\begin{minipage}{0.8\textwidth}
 \scalebox{.75}{\begin{tikzpicture}%[shorten >=1pt,->,node distance=5cm]
  %\tikzstyle{vertex}=[circle,color=red!60,fill=black!10,minimum size=25pt,inner sep=1pt,very thick]
  \tikzstyle{vertex}=[circle,draw,minimum size=25pt,inner sep=1pt]

  \foreach \name/\angle/\text in {S11/0/11, S10/8/10, S5/22/5, S4/30/4, S3/38/3, S2/46/2, S1/54/1, S0/62/0}
    \node[vertex] (\name) at (\angle:10cm) {\small $\text$};

  \path[->] (S0) edge [bend right=45,left] node {$a$} (S3)
            (S0) edge [above,pos=0.5] node {$b$}   (S1)
            (S1) edge [right,pos=0.2] node {$a,b$} (S2)
            (S2) edge [right,pos=0.2] node {$a,b$} (S3)
            (S3) edge [right,pos=0.2] node {$a,b$} (S4)
            (S4) edge [right,pos=0.4] node {$a,b$} (S5)
            (S10) edge [right,pos=0.4] node {$a,b$} (S11);
            
  \node at (9.65, 2.72) {$\vdots$}; 
            
  \node (T0) at (9.6,2.8) {};
  \node (T1) at (9.7,2.3)   {};
  
  \path[->] (S5) edge [right,pos=0.2] node {$a,b$} (T0)
            (T1) edge [right,pos=0.2] node {$a,b$} (S10);
            
%   \node at (4,2) {\large $q = 0, d = 2$};
%   \node at (4,1.5) {\large $\delta(q,b^2a) = \delta(q,a)$};
%   \node at (4,1) {\large $\delta(q,ba) = \delta(q,ab^{n - 1})$};
 
  \node[align=left] at (4,4) {$q = 0, d = 2$ \\ $\delta(q,b^2a) = \delta(q,a)$ \\ $\delta(q,ba) = \delta(q,ab^{n - 1})$};
  
  %\node[draw,align=left] at (3,0) {some text\\ spanning three lines\\ with manual line breaks};
  % [black,midway,xshift=0.8cm]
  \draw [decorate,decoration={brace,amplitude=10pt,mirror,raise=4pt},yshift=0pt]
(10.7,1) -- (10.7,8) node[align=left,midway,xshift=1.9cm]  {Reduce $b$-distance \\ by $d$ according \\ to Equation~\eqref{eqn:first_case_apply_a}.};
  %\draw [decoration={brace,amplitude=8pt},decorate] (2,0) -- (2,7);

   % Den r-Teil
   \path[->] (S11) edge [bend left,left] node {$a$} (S5);

   \draw [decorate,decoration={brace,amplitude=10pt,raise=4pt},yshift=0pt]
  (8.5,0) -- (8.5,3.5) node[align=left,midway,xshift=-1.6cm]  {Equation~\eqref{eqn:second_case_apply_a}, \\ map the state \\ to a state of \\ $b$-distance $r=2$ \\ from $\delta(q,a)=3$.};

  %% Neu, nach Review
  
  \node[vertex] (n-1) at (3.2,9) {\small $n-1$}; 
  \node (arc-to-n-1) at (2,9) {};
  \node at (1.8,9) {$\cdots$}; 
  
  \path[->] (arc-to-n-1) edge [above] node {$a,b$} (n-1)
            (n-1)        edge [above] node {$a,b$} (S0);
  
  \node(arc-from-11) at (10,-1.2) {$\vdots$};
  \path[->] (S11) edge [right] node {$b$} (arc-from-11);

\end{tikzpicture}}%
\end{minipage}%
 \caption{\footnotesize Illustration of the conditions stated in Theorem~\ref{prop:bin_max_sc_1}
  for an instance with $d = 2$. Shown are the first twelve states and the state $n-1$
  for a circular automaton with $n$ states. Note that in Theorem~\ref{prop:bin_max_sc_1},
  we suppose $0 < m < n$, and indeed, for $m \in \{0,n\}$ 
  Equation~\eqref{eqn:first_case_apply_a} does not apply in general.}
 \label{fig:theorem}
\end{figure}
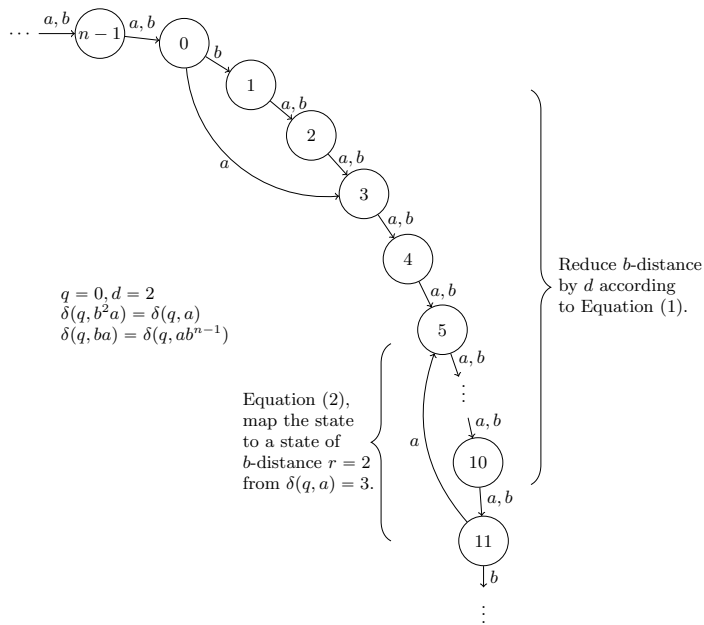

\begin{theoremrep}\todo{Intuition oder Proof Sketch}
\label{prop:bin_max_sc_1}
 Suppose $\mathscr A = (\Sigma, Q, \delta)$
 has $n$ states.
 Let $\{a,b\}\subseteq \Sigma$ (or any two words in $\Sigma^*$).
 Assume the letter $b$ cyclically permutes the states %with a single orbit
 and the letter $a$ has rank $n-1$.
 Then all $2$-sets are distinguishable in $\mathcal P_{\mathscr A}$, if we can find a state $q \in Q$
 and a number $d > 0$ coprime to $n$ such that
 for each $0 < m < n$ we\footnote{Note that $0 < m < n$ implies $\delta(q, b^m) \ne q$. Also
 note that we added $n$ on the right hand side to account for values $d > 1$. In Proposition~\ref{prop:bin_max_sc}
 we only subtract one from the exponent of $b$, which is always non-zero and strictly smaller than $n$, and so we do not needed
 this ``correction for the $b$-cycle'' in case of resulting negative exponents.}
 either have %mit den +n ist bedingung divisible bei d nicht eindeutig, aufpassen!
 \begin{equation}\label{eqn:first_case_apply_a}
  \delta(q, b^{m}a) = \delta(q, ab^{n + m - d})
 \end{equation} % das n kann weggelassenw erden wenn m - d \ge 0
 or, but only in case $m$ is not divisible by $d$,
 \begin{equation}\label{eqn:second_case_apply_a}
  \delta(q, ab^r) = \delta(q, b^ma) \mbox{ or } 
  \delta(q, b^mab^r) = \delta(q, a)
 \end{equation}
 for some number $0 \le r < n$ divisible by $d$.
\end{theoremrep}
\begin{proof} 
 Choose the notation for $q \in Q$
 and $r, d > 0$ as in the statement. Suppose $n > 1$.
 Note that $\delta(q, b^da) = \delta(q,a)$, where $\delta(q, b^d) \ne q$, as $d$
 is coprime to $n$. Hence, as $a$ has rank $n-1$, it is injective on $Q \setminus \{q\}$.
 Choose $0 \le s < n$ such that $\delta(q, a) = \delta(q, b^s)$. Then note that,
 as $\delta(q, b^ma) = \delta(q,b^{s+n+m-d})$
 or $\delta(q, b^ma) = \delta(q, b^{s+r})$
 or $\delta(q, b^ma) = \delta(q, b^{s+n-r})$, the set of these equations together with the state $\delta(q, a)$
 completely determine the action of the letter $a$.
%   % bla bla
%  $$
%   \delta(q, b^maa)
%   = \delta(q, ab^{n+m-d}a)
%   = \delta(q, ab^{s + (n - s) + n + m - d}a)
%   = \delta(q, ab^sab^{n + (n - s) + n + m - 2d})
%   = \delta(q, ab^{n + (n - s) + n + m - 2d}).
%  $$
 
 Let $\{ p_0, q_0 \}, \{ s_0, t_0 \} \subseteq Q$ be two distinct $2$-sets.
 We want to show that we can distinguish them in $\mathcal P_{\mathscr A}$. 
 First, let $u \in b^*$ be such that $\delta(q_0, u) = q$.
 By applying $u$ to both subsets of $Q$,
 we can assume that $q_0 = q$ with respect to distinguishability. %todo genauer?
 Write $p_0 = \delta(q_0, b^{m_0})$ for some $0 < m_0 < n$.
 Define, for $i \ge 0$, the sequences $\{p_i\}, \{q_i\}, \{s_i\}$ and $\{t_i\}$ by
 $$
  x_{i+1} = \delta(x_i, ab^{n-s}), \quad x \in \{p,q,s,t\}
 $$
 where $p_0, q_o, s_0, t_0$ are given as above.
 Also set $0 \le m_i < n$ with $\delta(q_i, b^{m_i}) = p_i$.
 Then, we have the following.
 
 %\medskip 
 \begin{quote}
 \noindent\underline{Claim 1: } For all $i \ge 0$ we have $q_i = q$.
 
 {\emph{Proof of Claim 1.}}
   It is $q_0 = q$ by definition. Then, inductively assuming $q_i = q$,
   as $\delta(q, a) = \delta(q, b^s)$,
   we have $q_{i+1} = \delta(q_i, ab^{n-s}) = \delta(q, ab^{n-s}) = \delta(q, b^n) = q$. $\qed$
 \end{quote}
 
 \begin{quote}
 \noindent\underline{Claim 2: } Let $i \ge 0$ and
 suppose $|\{ p_i, q_i \}| = |\{ s_i, t_i\}| = 2$.
 If $p_{i+1} = q_{i+1}$, then $s_i \ne t_i$;
 and if $s_{i+1} = t_{i+1}$, then $p_{i+1}\ne q_{i+1}$.
 
  {\emph{Proof of Claim 2.}}
  As $b$ is a permutation of $Q$ and $a$ has rank $n-1$,
  the transformation given by the word $ab^{n-s}$
  has rank $n-1$. So, as precisely one pair is collapsed,
  we have that if $\{ p_i, q_i \} \ne \{ s_i, t_i \}$
  and both are $2$-subsets, then $\{ p_{i+1}, q_{i+1} \} \ne \{ s_{i+1}, t_{i+1} \}$,
  as only $|\{ p_{i+1}, q_{i+1}, s_{i+1}, t_{i+1} \}| \in \{2,3,4\}$ is possible
  because at most one $2$-set could be collapsed.
  Note that, as by assumption $ \{ p_0, q_0 \} \ne \{ s_0, t_0 \}$,
  this gives inductively $\{ p_i, q_i \} \ne \{ s_i, t_i \}$
  for all $i$ with $|\{ p_i, q_i \}| = |\{ s_i, t_i\}| = 2$.
  As $|\{ p_i, q_i \}| = |\{ s_i, t_i\}| = 2$ is assumed in the claim,
  we have $\{ p_i, q_i \} \ne \{ s_i, t_i \}$
  and as at most one pair could be collapsed the claim follows. %todo sprechweise pair collapsen einführen.
  $\qed$
 \end{quote}
 
  \begin{quote}
 \noindent\underline{Claim 3: } There exists some $i > 0$
  such that $p_i = q_i = q$.
 
  {\emph{Proof of Claim 3.}} First, suppose we have 
  some $i \ge 0$ and $|\{ p_i, q_i \}| = 2$ 
  such that, after reading the letter $a$, Equation~\eqref{eqn:second_case_apply_a}
  applies, i.e., we have
  $$
   \delta(q_i, ab^r) = \delta(q_i, b^{m_i}a) \mbox{ or }
   \delta(q_i, b^{m_i}ab^r) = \delta(q_i, a).
  $$
  Write $r = m d$ for some $m > 0$.
  \begin{enumerate}
  \item[(i)] Suppose $\delta(q_i, ab^r) = \delta(q_i, b^{m_i}a) = \delta(p_i, a)$.
    
    Then $p_{i+1} = \delta(p_i, ab^{n-s}) = \delta(q_i, ab^{n-s+r}) = \delta(q_{i+1}, b^r) = \delta(q_{i}, b^r)$,
    using $q_i = q$, by Claim 1, and $\delta(q, ab^{n-s}) = q$.
    Then $\delta(p_{i+1}, a) = \delta(q, b^ra)$.
    As $r \equiv 0 \pmod{n}$, Equation~\eqref{eqn:first_case_apply_a}
    must apply, and, as $0 < r < n$, we find
    $$
     \delta(p_{i+1}, a) = \delta(q, ab^{n + r-d}),
    $$
    which equals $\delta(q, ab^{r - d})$ as $r - d \ge 0$.
    So, $p_{i+2} = \delta(p_{i+1}, ab^{n-s}) = \delta(q, ab^{n-s + r - d}) = 
    \delta(\delta(q, ab^{n-s}, b^{r - d})) = \delta(q, b^{r-d})$.
    Continuing inductively for $m-1$ additional steps, we find
    $$
     p_{i+1+m} = \delta(q, b^{r - md}) = q = q_{i+1+m}.
    $$
    
  \item[(ii)] Suppose $\delta(q_i, b^{m_i}ab^r) = \delta(q_i, a)$.
   
   As $p_i = \delta(q_i, b^{m_i})$, we have $\delta(p_i, ab^r) = \delta(q_i, a)$.
   Hence, for $0 \le j < n$ 
   such that $\delta(p_i, ab^{j}) = q$,
   we have $\delta(q_i, ab^j) = \delta(\delta(p_i, ab^r), b^j) = \delta(\delta(p_i, ab^j), b^r) = \delta(q, b^r)$.
   Then we can write $r = md$ and proceed exactly 
   as in Case (i), to find
   $$
    \delta(\{q, \delta(q, b^r)\}, (ab^{n-s})^m) = \{q\}.
   $$
   So that $\delta(\{ \delta(p_i, a), \delta(p_i, ab^r) )\}, b^j(ab^{n-s})^m)) = \{q\}$.
   
  \end{enumerate}
  Otherwise, after reading the letter $a$ in the process of constructing the sequences $p_i$
  and $q_i$, only Equation~\eqref{eqn:first_case_apply_a}
  applies. As $d$ is coprime to $n$, we find $k \ge 0$ such that $m - kd \equiv 0 \pmod{n}$, i.e.,
  $m - kd + ln = 0$ for some $k \ge l$ (as $0 \le m < n$, we have $l \le k$,
  for $l > k$ would imply $ln > kd$, and so we could not have $ln + m = kd$). But note
  that $l \le 0$ is possible.
  By Equation~\eqref{eqn:first_case_apply_a} and using $k \ge l$,
  \begin{align*}
   p_k & = \delta(p_0, (ab^{n-s})^k) \\
       & = \delta(q_0, b^m(ab^{n-s})^k) & [p_0 = \delta(q_0, b^m)]\\
       & = \delta(q, b^m(ab^{n-s})^k)  & \mbox{[$q_0 = q$]} \\
       & =  \delta(q, b^mab^{n-s}(ab^{n-s})^{k-1}) \\ 
       & = \delta(q, ab^{n + m - d}b^{n-s}(ab^{n-s})^{k-1}) & \mbox{[Equation~\eqref{eqn:first_case_apply_a}]} \\
       & = \delta(q, b^{n + m - d}(ab^{n-s})^{k-1}) & \mbox{[$\delta(q, ab^{n-s}) = q$]} \\
       & = \delta(q, ab^{2n + m - 2d}b^{n-s}(ab^{n-s})^{k-2}) \\
       & \quad \quad \vdots \\
       & = \delta(q, b^{(k-1)n + m - (k-1)d}ab^{n-s}) \\
       & = \delta(q, b^{kn + m - kd}) \\ 
       & = \delta(q, b^{kn - ln}) \\
       & = \delta(q, b^{(k-l)n}) = q = q_k.
   \end{align*}
   So, the word $w = (ab^{n-s})^k$ maps both states $\{ p_0, q_0 \}$
   to the same state~$q$.~$\qed$
 \end{quote}

 The Claims (2) and (3) imply that we have some word $w \in (ab^{n-s})^*$
 which collapses precisely one $2$-set. 
%  For either we find some smallest
%  $i > 0$ with $|\{ p_{i-1}, q_{i-1}\}| = |\{s_{i-1}, t_{i-1}\}| = 2$
%  such that $s_i = t_i$, then $p_i \ne q_i$ by Claim (2),
%  or we must find some smallest $i > 0$ with 
%  $|\{ p_{i-1}, q_{i-1}\}| = |\{s_{i-1}, t_{i-1}\}| = 2$
%  and $p_i = q_i$, by Claim (3), where $s_i \ne t_i$, as $i$ is chosen to be smallest
%  and the former case does not apply.
 For, by Claim (3), we must have some smallest $i > 0$
 such that $|\{ p_{i-1}, q_{i-1}\}| = |\{s_{i-1}, t_{i-1}\}| = 2$
 and either $p_i = q_i$, or $s_i = t_i$, but not both could
 be equal by Claim (2) and the minimality of $i$.~$\qed$

\end{proof}

In the formulation of Theorem~\ref{prop:bin_max_sc_1}, we have $r > 0$,
as in this case $m \ne 0$. Also, note that $\delta(q, b^mab^r) = \delta(q,a)$
is equivalent with $\delta(q,ab^{n-r}) = \delta(q, b^ma)$,
as for any states $s,t \in Q$ and $0 \le k < n$ we have $\delta(s, b^k) = t$
if any only if $\delta(t, b^{n-k}) = s$, as $\delta(s, b^n) = s$.
The conditions mentioned in Theorem~\ref{prop:bin_max_sc_1}
are the most general ones in this paper, but let us state next, as
a corollary, a more relaxed formulation, stating that we can reduce the distance
on the cycle by one for each application of some word of rank $n - 1$.

% + wenn fall 2 mit teilbarkeit nicht auftritt, gibt es stets so ein wort.
% und weiter unten für alle n, auf n/2 drücken , w = a setzen.
% und erwähnen dass man thm mit wörter formulieren kann, nächsten als beispiel
\begin{corollary}\todo{automaton semi-automaton nicht unterscheiden, schreiben.}
\label{cor:bin_max_sc_1}
 Let $\mathscr A = (\Sigma, Q, \delta)$
 be a circular automaton with $n$ states
 where the letter $b$ permutes the states with a single orbit.
 Suppose we find a word $w \in \Sigma^*$ % muss das rank n - 1 haben?
 and state $q \in Q$
 such that, for\footnote{%Note that we have excluded $m = n - 1$,
 %as this would imply $\delta(q, w) = \delta(q, wb^n) = \delta(q, wb^{n-1})$
 %with Equation~\eqref{eqn:first_case_apply_w}.}
 %because of the bounds for $m$ stated in Theorem~\ref{prop:bin_max_sc_1}.}
 Note that here, even if the bounds for $m$ from Theorem~\ref{prop:bin_max_sc_1}
 do not include this case, $\delta(q, w) = \delta(q, b^nw) = \delta(q, wb^{n-1})$,
 which is equivalent with $\delta(q, w) = \delta(q, b)$.}
 $0 \le m < n$,\todo{für einheitlichkeit m und m - 1 schreiben.}
 \begin{equation}\label{eqn:first_case_apply_w}
  \delta(q, b^{m + 1}w) = \delta(q, wb^{m})
 \end{equation}
 Then all $2$-sets are distinguishable in $\mathcal P_{\mathscr A}$.
 In particular, if $\mathscr A$ is completely reachable,
 then $\stc(\Syn(\mathscr A)) = 2^n - n$.
\end{corollary}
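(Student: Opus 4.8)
The plan is to derive Corollary~\ref{cor:bin_max_sc_1} as a special case of Theorem~\ref{prop:bin_max_sc_1}, rather than reproving distinguishability from scratch. The point is that the single hypothesis~\eqref{eqn:first_case_apply_w} with $w$ in place of $a$ is just the instance $d = 1$ of the more general setup: since $1$ is trivially coprime to $n$, we only ever need Equation~\eqref{eqn:first_case_apply_a}, and the alternative~\eqref{eqn:second_case_apply_a} (which in Theorem~\ref{prop:bin_max_sc_1} is only invoked ``in case $m$ is not divisible by $d$'') never arises because every integer is divisible by $1$. So the work is almost entirely a matter of matching the index conventions.

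First I would fix $d = 1$ and take the word $w$ as the rank-$(n-1)$ ``letter'' $a$ of Theorem~\ref{prop:bin_max_sc_1}; this is legitimate since, as emphasized in the text before Theorem~\ref{prop:bin_max_sc_1}, all the statements may be phrased with words in place of letters. The remaining task is to check that~\eqref{eqn:first_case_apply_a} with $d = 1$ is equivalent to the hypothesis~\eqref{eqn:first_case_apply_w}. Equation~\eqref{eqn:first_case_apply_a} with $d=1$ reads $\delta(q, b^{m}a) = \delta(q, ab^{n + m - 1})$ for each $0 < m < n$; using the standard identity $\delta(s,b^k) = t \iff \delta(t,b^{n-k}) = s$ (equivalently, multiplying both sides on the right by $b$ and noting $b^n$ is the identity), this is the same as $\delta(q, b^{m}ab) = \delta(q, ab^{m})$, i.e.\ exactly~\eqref{eqn:first_case_apply_w} after the reindexing $m \mapsto m-1$ on the range $1 \le m \le n-1$, that is, $0 \le m-1 \le n-2$. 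The boundary value $m = n-1$ (resp.\ $m-1 = n-1$) is precisely the case covered by the footnote attached to Corollary~\ref{cor:bin_max_sc_1}: there $\delta(q,b^n w) = \delta(q,w)$ forces $\delta(q,b^{n}w) = \delta(q, wb^{n-1})$ automatically, so~\eqref{eqn:first_case_apply_w} for $m = n-1$ is equivalent to $\delta(q,w) = \delta(q,b)$, which carries no content beyond the other cases; hence nothing is lost. Once the hypotheses are matched, Theorem~\ref{prop:bin_max_sc_1} immediately yields that all $2$-sets are distinguishable in $\mathcal P_{\mathscr A}$.

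For the ``in particular'' clause, I would invoke Lemma~\ref{lem:compl_reach_implies_max_Syn_2sets}: if $\mathscr A$ is completely reachable and all $2$-sets are pairwise distinguishable in $\mathcal P_{\mathscr A}$ — which we have just established — then $\stc(\Syn(\mathscr A)) = 2^n - n$. (One can also note that a circular automaton is strongly connected, so Corollary~\ref{cor:max_sync_implies_compl_reach_binary} and Lemma~\ref{lem:SCC_Syn_max_implies_compl_reach} show the complete-reachability hypothesis is in fact necessary here, but that is not needed for the statement as written.)

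The only genuinely delicate point — the ``main obstacle'' — is the bookkeeping of exponents modulo $n$: making sure the shift between ``$b^{m}a$ versus $ab^{n+m-d}$'' in Theorem~\ref{prop:bin_max_sc_1} and ``$b^{m+1}w$ versus $wb^{m}$'' in Corollary~\ref{cor:bin_max_sc_1} is reconciled correctly, including the off-by-one in the range of $m$ and the degenerate boundary case flagged in the footnote. Everything else is formal substitution. I would therefore present the proof as: ``Apply Theorem~\ref{prop:bin_max_sc_1} with $a := w$ and $d := 1$; since every $m$ is divisible by $1$, only~\eqref{eqn:first_case_apply_a} is ever needed, and with $d=1$ it is equivalent to~\eqref{eqn:first_case_apply_w} (using $\delta(s,b^k)=t \iff \delta(t,b^{n-k})=s$ and reindexing $m$); the boundary case is handled by the footnoted observation. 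Distinguishability of all $2$-sets follows, and the state-complexity claim then follows from Lemma~\ref{lem:compl_reach_implies_max_Syn_2sets}.''
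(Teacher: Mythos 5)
Your overall strategy---apply Theorem~\ref{prop:bin_max_sc_1} with $a := w$ and $d := 1$, so that Equation~\eqref{eqn:second_case_apply_a} is never needed and Equation~\eqref{eqn:first_case_apply_a} reduces, after the shift of $m$ by one and using that $b^n$ acts as the identity, to Equation~\eqref{eqn:first_case_apply_w}---is exactly the paper's, and your index bookkeeping is correct: the theorem only needs the cases $1 \le m \le n-1$, i.e.\ $0 \le m \le n-2$ in the corollary's indexing, so whatever happens at the boundary value $m = n-1$ is irrelevant to the application.

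There is, however, one unverified hypothesis, and it is precisely the point on which the paper's short proof spends all of its effort: Theorem~\ref{prop:bin_max_sc_1} requires the letter (or word) playing the role of $a$ to have rank $n-1$, and the corollary does \emph{not} assume this of $w$---you simply call $w$ ``the rank-$(n-1)$ letter'' without deriving it. It has to be extracted from Equation~\eqref{eqn:first_case_apply_w}: the case $m = 0$ gives $\delta(q, bw) = \delta(q, w)$, so $w$ merges $q$ with $\delta(q,b) \ne q$ and hence has rank at most $n-1$; and for $m \in \{1, \dots, n-1\}$ the hypothesis gives $\delta(q, b^m w) = \delta(t, b^{m-1})$ with $t = \delta(q,w)$, where the states $\delta(t, b^{m-1})$ are pairwise distinct because $b$ is an $n$-cycle, while $\{\delta(q,b^m) : 1 \le m \le n-1\} = Q \setminus \{q\}$; so $w$ is injective on $Q \setminus \{q\}$ and has rank exactly $n-1$. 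Without this step the theorem cannot be invoked (its proof uses the rank-$(n-1)$ assumption essentially, e.g.\ to argue that each application of $ab^{n-s}$ collapses at most one pair). With this verification inserted, your argument coincides with the paper's; the ``in particular'' clause via Lemma~\ref{lem:compl_reach_implies_max_Syn_2sets} is fine as you state it.
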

\begin{proof}
 % rank von w?  rank \ge n - 1 klar, setze m = 1,
 % ansonsten qb
 Set $s = \delta(q, b)$ and $t = \delta(q, w)$.
 Then, $\delta(s, w) = t = \delta(q, w)$ and $s \ne q$.
 For $m \in \{1,\ldots,n-1\}$
 we have $\delta(q, b^m w) = \delta(t, b^{m-1})$
 and 
 \[ 
 \{ \delta(q, b), \delta(q, b^2), \ldots, \delta(q, b^{n-1}) \} = Q \setminus \{q\}.
 \]
 So, as $b$ is a permutation, $w$ acts injective on $Q \setminus \{q\}$
 and has rank $n - 1$.
 Now, apply Theorem~\ref{prop:bin_max_sc_1}, 
 interpreting $w$ as the letter $a$ of rank $n - 1$. \qed
\end{proof}

% bedingung nicht teilbar etwas speziell bzw. neu für unbekannte fälle.

\begin{comment}
\begin{toappendix}
\begin{proposition}
 Let $\mathscr A = (\Sigma, Q, \delta)$
 be a circular automaton with $n > 1$ states
 where the letter $b$ permutes the states with a single orbit.
 Suppose the second case, i.e., the one described by
 Equation~\ref{eqn:second_case_apply_a},
 never applies, i.e., we have a number $d > 0$
 coprime to $n$ such that for each $0 < m < n$
 $$
 todo gleichung von oben
 $$
 Then we find a word
 $$
  +1 von vorherigen theorem, also die bedingung im korollar; sozusagen koroolar und theorem gleich ohne gleichung (2)
 $$
\end{proposition}
\begin{proof}
 As $d$ is coprime to $n$, we can write $-1 = k_1 d + k_2 n$
 for $k_1,k_2 \in \mathbb Z$.
 %Hence, there exists $l_1, l_2 \ge 0$ and $r \ge 0$ \todo{genauer?}
 %such that $rn + 1 = l_1 d + l_2 n$. 
 
 % brauche xn - yd = -1 oder n - 1
 
 As $k_1 d + k_2 n = (k_1 - n) d + (k_2 + d) n$,
 by repeating this rearrangement,
 we can assume $k_1 < 0$ and $k_2 \ge 0$.
 % k_2 nach belieben abändern. drauf addiernn n - 1, oder noch weiter abziehen
 Then %k_2 = k_1 + x, k_2 \ge |k_1|
 
 % fall k_2 > k_1 noch n's drauf
 %      k_2 < k_2 kann man abziehen.
 $$
  \delta(q, b^m a^{k_1})
   = \delta(q, ab^{k_1 n + m - |k_1| d})
 $$
 
 % ln + 1 = k_1n - k_1d ?
 % k_2 = k_1 + x
 %
 % -x n + 1 = k_1 n - k_1 d
 % 
 % was wenn -x negativ?, x.h. positiv, dann eifnach einsetzen.
 % b^{x} setzen, aber nach erster anwendung uU nichtmehr bei q, siehe
 % beweis theorem
 %
 % also b^{n-s} anhängen.
\end{proof}
\end{toappendix}
\end{comment}

Actually, for the relaxed condition mentioned in Corollary~\ref{cor:bin_max_sc_1}
we can give a small strengthening by only requiring that we can reduce the ``cyclic distance''
for all states which are no more than $\lfloor n/2 \rfloor + 1$ steps, or applications
of $b$, away from some
specific state. 
% However, it does now allow an alternative condition
% for the application of the letter $a$ along these states
% as in the previous proposition (which was necessary to encompass examples
% like $\mathscr K_n$), but only
% that it reduces the distance as stated.

\begin{propositionrep}
\label{prop:bin_max_sc}
 Let $\Sigma = \{a,b\}$ and suppose $\mathscr A = (\Sigma, Q, \delta)$
 has $n$ states and is completely reachable 
 with the letter $a$ having rank $n-1$ and the letter $b$ permuting the states with a single orbit. 
 Then $\stc(\Syn(\mathscr A)) = 2^n - n$
 if we can find a state $q \in Q$
 such that for all $0 \le m \le \lfloor n/2 \rfloor - 1$ we have % todo, in beweis die m_i minimal so gewählt dass dies erfüllen
 \begin{equation}\label{eqn:binary_max_sync_states}
   \delta(q, b^{m+1}a) = \delta(q, ab^m).
 \end{equation}
\end{propositionrep}
\begin{proof} 
 Let $q \in Q$ be the state from the statement.
 Note that, as $\delta(q, ba) = \delta(q, a)$ and $a$ has rank $n - 1$,
 the letter $a$ acts injective on $Q \setminus \{ q \}$. Also,
 on all states $\delta(q, b^k)$ with $0 \le k \le \lfloor n / 2 \rfloor$
 the action of $a$ is determined by the state $\delta(q,a)$.
 We will show that all $2$-sets of states are distinguishable in $\mathcal P_{\mathscr A}$.
 By Lemma~\ref{lem:compl_reach_implies_max_Syn_2sets}, this will give
 our claim.
 Let $\{s,t\}, \{p,r\}\subseteq Q$ be two distinct $2$-sets.
 Choose $m_1, m_2 > 0$ minimal such that
 \begin{equation}\label{eqn:m1}
  \delta(s, b^{m_1}) = t \mbox{ or } \delta(t, b^{m_1}) = s
 \end{equation}
 and
 \begin{equation}\label{eqn:m2}
  \delta(p, b^{m_2}) = r \mbox{ or } \delta(r, b^{m_2}) = p.
 \end{equation}
 As $\delta(t, b^{m_1}) = s$ if and only if $\delta(s, b^{n - m_1}) = t$, and similarly for $\{p,r\}$,
 we have $0 < m_1, m_2 \le \lfloor n / 2 \rfloor$.
 We will do induction on $\min\{m_1, m_2\}$.
 Without loss of generality, assume $m_1 \le m_2$
 and $\delta(s, b^{m_1}) = t$.
 Choose $0 \le k < n$ such that $\delta(s, b^k) = q$.
 Then, by assumption, $\delta(s, b^{k+m_1}a) = \delta(s, b^k a b^{m_1-1})$.
 We distinguish two cases.
 \begin{enumerate}
 \item[(i)] Suppose $s \notin \{p, r\}$. Then $q \notin \delta(\{ p, r \}, b^k)$. As $a$ acts injective on $Q \setminus \{q\}$,
   we have $|\delta(\{p,r\}, b^ka)| = 2$.
   If $m_1 = 1$,then 
   $$
    \delta(s, b^k a) = \delta(s, b^k a b^{m_1 - 1}) = \delta(s, b^{k+m_1}a) = \delta(t, b^k a).
   $$
   Hence $|\delta(\{s,t\}, b^k a)| = 1$ and the sets $\{s,t\}$ and $\{p,r\}$
   are distinguished in $\mathcal P_{\mathscr A}$ by $b^k a$.
   If $m_1 > 1$, then
   $\delta(s, b^kab^{m_1-1}) = \delta(s, b^{k+m_1}a) = \delta(t, b^k a)$.
   As $0 < m_1 - 1 < \lfloor n / 2 \rfloor$, we have $| \{ \delta(s, b^k a), \delta(t, b^k a) \} | = 2$.
   Hence, for the two $2$-sets
   $$
    \{ \delta(s, b^k a), \delta(t, b^k a) \} \mbox{ and } \{\delta(p, b^k a), \delta(r, b^k a) \}
   $$
   the minimal powers of $b$ that map them to each other, i.e., fulfill
   the corresponding Equations~\eqref{eqn:m1} and~\eqref{eqn:m2},
   have strictly smaller exponents than $\min\{m_1, m_2\}$.
   So, we can use our induction hypothesis, implying that some word $u \in \Sigma^*$
   maps one set to a singleton set, but not the other.
   Then $b^k a u$ would distinguish $\{s,t\}$ and $\{p,r\}$.
   
 \item[(ii)] Suppose $s \in \{p, r\}$. Without loss of generality, assume $\delta(p, b^{m_2}) = r$.
  If $s = p$, then $m_1 = m_2$ would imply $r = t$, which is excluded as both $2$-sets are assumed
  to be distinct. Hence, as $m_1 = \min\{m_1, m_2\}$, we have $m_1 < m_2$.
  If $m_1 = 1$, then, as in case (i), we have $|\delta(\{s,t\}, b^k a)| = 1$,
  but 
  $$
   \delta(q, b) \notin \{ \delta(p, b^k), \delta(r, b^k) \}. 
  $$
  As $\delta(q, b) = \delta(r, b^k)$ would imply, as $b$ is a permutation,
  that $\delta(p, b) = r$.
  But, as $p = s$, $\delta(p, b^k) = q$, and $\delta(p, b^{m_2}) = r$ 
  is minimal with $1 < m_2 < \lfloor n/2 \rfloor$.
  Hence $\delta(p, b) = r$ would contradict the minimality of $m_2$.
  So, as $a$ acts injective on $Q \setminus \{ \delta(q, b) \}$, as it only
  collapses $\{q, \delta(q,b)\}$, we have $|\{ \delta(p, b^k), \delta(r, b^k) \}| = 2$.
  So, the word $b^k a$ distinguishes $\{s,t\}$ and $\{p,r\}$.
  Now suppose $m_1 > 1$.
  Then, as in case (i), $\delta(s, b^k a b^{m_1 - 1}) = \delta(t, b^k a)$
  and $| \{ \delta(s, b^k a), \delta(t, b^k a) \} | = 2$.
  Similarly, as $1 < m_2 < n$, we get $| \{ \delta(p, b^k a), \delta(r, b^k a) \} | = 2$.
  But, again, the minimal powers of $b$ that map them to each other, i.e., fulfill
   the corresponding Equations~\eqref{eqn:m1} and~\eqref{eqn:m2},
   have strictly smaller exponents than $\min\{m_1, m_2\}$.
   So, we can use our induction hypothesis, implying that some word $u \in \Sigma^*$
   maps one set to a singleton set, but not the other.
   Then $b^kau$ would distinguish $\{s,t\}$ and~$\{p,r\}$.
   If we have $s = r$, then $\delta(r, b^{n - m_2}) = p$.
   Similarly to the case $s = p$, we have $m_1 < n - m_2$.
   As $m_1 < n - m_2 < n$, we can argue like in the previous case $s = p$
   to find either two new $2$-sets with a stricly smaller induction parameter,
   or one is a singleton, but not the other.
   Note that for $1 < m < n$ we have $|\delta(\{q, \delta(q, b^m)\}, a)| = 2$.
 \end{enumerate} 
 So, by induction, we can distinguish all $2$-sets in $\mathcal P_{\mathscr A}$.~\qed

\end{proof} 

Finally, we show that the mentioned sufficient conditions
are not necessary. In Example~\ref{ex:thm_not_usable}
we will give a circular automaton whose set of synchronizing words has maximal state complexity
but for which this could not be derived with any of the results stated here.

\begin{example} 
\label{ex:thm_not_usable}
\begin{comment} 
% Allgemeine Form
%wobie, a^2 identität, dass eien; das andere wie cerny, nur eine permutation, dann zwei weiter
%a lso nicht ausreichend. ne auch cerny. aber löänge der wörter steigt
 %Here, we will show that the mentioned conditions are not necessary, i.e.,
 %we give 

 Let $\mathscr A = (\Sigma, Q, \delta)$ with $\Sigma = \{a,b\}$
 and $Q = \{ q_0, q_1, \ldots, q_{n-1} \}$ for some $n > 2$
 and 
 \[
  \delta(q_i, b) = q_{i+1}, i \in \{0,\ldots, n-2\}, \delta(q_{n-1}, b) = q_0
 \]
 and
 \[
  \delta(q_0, a) = q_1, \delta(q_1, a) = q_2, \delta(q_2, a) = q_1, \delta(q_i, a) = q_i, i \in \{3,\ldots,n-1\}. 
 \]
 Setting\footnote{This actually results in an automaton closely resembling the \Cerny-automaton,
 but where the target state inside the cycle is different.\todo{genauer beschreiben.}}
 $w = aa$ we find
 $$
  \delta(q_0, w) = q_2, \delta(q_i, w) = q_i, i \in \{1,\ldots,n-1\}
 $$
 and
 $$
  \delta(q_0, b^{m}w) = \delta(q_0, wb^{n + m - 2})
 $$
 Hence, with Theorem~\ref{prop:bin_max_sc_1}, for odd $n$, all sets in $\mathcal P_{\mathscr A}$
 are distinguishable. %+don compl reach., also sc(syn)...)
 % für alle anderen wörter bestimmen, zeigen dass es nicht geht.
 
 Here we have
 $$
  \delta(q_0, b^ma) = \left\{ 
  \begin{array}{ll}
       \delta(q_0, ab^{m-1}) & \mbox{if } 2 < m < n;  \\
       \delta(q_0, ab)       & \mbox{if } m = 1; \\
       \delta(q_0, a)        & \mbox{if } m = 2.
  \end{array}
  \right. 
 $$
 Also, we can show $\delta(q, a^k) = \delta(q, a^{k \bmod 2})$.
 % mm aaba=? rank schon zu klein.
 % induktion wx mit einbeziehung letzter zeichen
 
 % induktiviely w = b^{n_1} a b^{N-2} usw ghet auch nicht, a rest 2, oder zwei aa dann cerny
\end{comment}

 Let $\mathscr A = (\Sigma, [4], \delta)$ with $\Sigma = \{a,b\}$,
 $\delta(i, b) = (i+1) \bmod 4$
 and 
 $ 
  \delta(0, a) = 1, \delta(1, a) = 2, \delta(2, a) = 1, \delta(3, a) = 3.
 $
 Please see Figure~\ref{fig:thm_not_usable} for a graphical depiction of $\mathscr A$
 and $\mathcal P_{\mathscr A}$.
 Then, all words of rank $3$ 
 are listed in Table~\ref{tab:rank_3_words}.
 %We see that for each such word of rank $3$, the
 % ist das mit dem komma valide, man kann den satz auch umdrehen the ... we see that ... ohne komma?
 In each word $w$ of rank $3$ the distance of the two distinct states mapped
 to one state is $2$. So, in Equation~\eqref{eqn:first_case_apply_a},
 for each such word of rank $3$ (in place of $a$),
 we would have $d = 2$. But $2$ is not coprime to $4$, hence Theorem~\ref{prop:bin_max_sc_1}
 does not apply here. However, we have $\stc(\Syn(\mathscr A)) = 2^4 - 4 = 12$.
 We see in Figure~\ref{fig:thm_not_usable}
 that every subset is reachable.
 We also see that $a$ distinguishes $\{0,2\}$ from every other $2$-set
 of states, $ba$ distinguishes $\{1,3\}$ from every other, $baba$ distinguishes $\{2,3\}$
 from $\{0,3\}$, $\{1,2\}$ and $\{0,1\}$ and these latter three $2$-sets
 are easily seen to be distinguishable by words in $b^*aba$.
 So, by Lemma~\ref{lem:compl_reach_implies_max_Syn_2sets},
 all non-empty subsets of states are distinguishable.

 \begin{table}[ht]
  \vspace{-0.7cm}
  \centering
 \[
  \begin{array}{|cl|cl|cl|cl|} 
   \hline
   \mbox{Word}        & \mbox{Mapping} & \mbox{Word} & \mbox{Mapping} & \mbox{Word} & \mbox{Mapping} & \mbox{Word}   & \mbox{Mapping} \\ \hline
   b^2 a^2 b^2 & [0, 1, 0, 3]   & a        & [1,2,1,3]      & ab^3        & [0,1,0,2]  & ab^2ab    & [0,2,0,3]  \\  
   a^2 b^2     & [0, 3, 0, 1]   & b^2a     & [1,3,1,2]      & b^2ab^3     & [0,2,0,1]  & ab^2a^2b  & [0,3,0,2]  \\
   ab^2a^2b^2  & [1, 0, 1, 3]   & a^2      & [2,1,2,3]      & a^2b^3      & [1,0,1,2]  & b^2ab     & [2,0,2,3]  \\
   ab^2ab^2    & [1, 3, 1, 0]   & b^2a^2   & [2,3,2,1]      & b^2a^2b^3   & [1,2,1,0]  & ab        & [2,3,2,0]  \\
   ab^2        & [3, 0, 3, 1]   & ab^2a    & [3,1,3,2]      & ab^2ab^2    & [2,0,2,1]  & b^2a^2b   & [3,0,3,2]  \\ 
   b^2ab^2     & [3, 1, 3, 0]   & ab^2a^2  & [2,2,3,1]      & ab^2a^2b^3  & [2,1,2,0]  & a^2b      & [3,2,3,0]  \\ \hline
   bab^2a^2b^2 & [0,1,3,1]      & ba^2     & [1,2,3,2]      & ba^2b^3     & [0,1,2,1]  & b^3ab     & [0,2,3,2]  \\
   bab^2       & [0,3,1,3]      & bab^2a   & [1,2,3,2]      & bab^2ab^3   & [0,2,1,2]  & b^3a^2b   & [0,3,2,3]  \\
   b^3a^2b^2   & [1,0,3,0]      & ba       & [2,1,3,1]      & bab^3       & [1,0,2,0]  & bab^2ab   & [2,0,3,1]  \\
   b^3ab^2     & [1,3,0,3]      & bab^2a^2 & [2,3,1,3]      & bab^2a^2b^3 & [1,2,0,2]  & ba^2b     & [2,3,0,3]  \\
   ba^2b^2     & [3,0,1,0]      & b^3a     & [3,1,2,1]      & b^3ab^3     & [2,0,1,0]  & bab^2a^2b & [3,0,2,1]  \\
   bab^2ab^2   & [3,1,0,1]      & b^3a^2   & [3,2,1,3]      & b^3a^2b^3   & [2,1,0,2]  & bab       & [3,2,0,2] \\ \hline
  \end{array}
 \]
   \caption{All rank $3$ words for the automaton from Example~\ref{ex:thm_not_usable}. To the right of each word the induced
   transformation on the states is written, where $j \in [4]$ written at position $i \in [4]$
   means the word maps the state $i$ to state $j$. The entries are ordered such that for two words $u,v$
   in the same row we have $\delta(i, u) = \delta(j, u)$ iff $\delta(i, v) = \delta(j, v)$ for $i \in [4]$
   and the images of words in the same column are equal.} 
  \label{tab:rank_3_words}
 \end{table}
 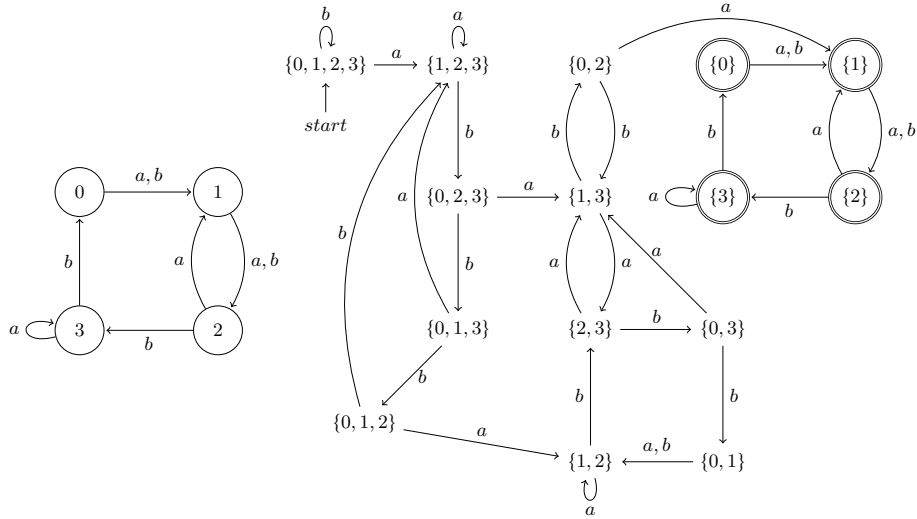
\begin{figure}[htb]
\begin{minipage}{0.3\textwidth}
 \scalebox{.8}{\begin{tikzpicture}[shorten >=1pt,->,node distance=2.3cm]
  %\tikzstyle{vertex}=[circle,color=red!60,fill=black!10,minimum size=25pt,inner sep=1pt,very thick]
  %\tikzstyle{vertex}=[circle,draw,minimum size=25pt,inner sep=1pt]
  
  \node[state] (0) {$0$};
  \node[state, right of=0] (1) {$1$};
  \node[state, below of=1] (2) {$2$};
  \node[state, left of=2]  (3) {$3$};
  
  \path[->] (0) edge [above] node {$a,b$} (1)
            (1) edge [bend left, right] node {$a,b$} (2)
            (2) edge [below] node {$b$} (3)
            (3) edge [left]  node {$b$} (0)
            (2) edge [bend left, left] node {$a$} (1)
            (3) edge [loop left] node {$a$} (3);
\end{tikzpicture}}%
\end{minipage}%
\begin{minipage}{0.7\textwidth}
\scalebox{.8}{\begin{tikzpicture}[shorten >=1pt,->,node distance=2.2cm]
  \tikzstyle{vertex}=[]
  
   \node at (0,-1) (s) {$start$};

  \node[vertex] (0123)               {$\{0,1,2,3\}$};
  \node[vertex, right of=0123] (123) {$\{1,2,3\}$};
  \node[vertex, below of=123]  (023) {$\{0,2,3\}$};
  \node[vertex, below of=023]   (013) {$\{0,1,3\}$};
  \node[vertex, below left of=013]  (012) {$\{0,1,2\}$};
  
  \node[vertex, right of=023] (13) {$\{1,3\}$};
  \node[vertex, above of=13]  (02) {$\{0,2\}$};
  \node[vertex, below of=13]  (23) {$\{2,3\}$};
  \node[vertex, right of=23]  (03) {$\{0,3\}$};
  \node[vertex, below of=03]  (01) {$\{0,1\}$};
  \node[vertex, left of=01]  (12) {$\{1,2\}$};
  
  \node[state, accepting, right of=02] (0) {$\{0\}$};
  \node[state, accepting, right of=0]  (1) {$\{1\}$};
  \node[state, accepting, below of=1]  (2) {$\{2\}$};
  \node[state, accepting, left of=2]   (3) {$\{3\}$};

  \path (123) edge [loop above] node {$a$} (123);
  \path (012) edge [above] node {$a$} (12);
  \path (013) edge [bend left,left] node {$a$} (123);
  
  \path[->] (s) edge (0123);
  
    \path[->] (0) edge [above] node {$a,b$} (1)
            (1) edge [bend left, right] node {$a,b$} (2)
            (2) edge [below] node {$b$} (3)
            (3) edge [left]  node {$b$} (0)
            (2) edge [bend left, left] node {$a$} (1)
            (3) edge [loop left] node {$a$} (3);

  \path[->] (0123) edge [above] node {$a$} (123);
  \path[->] (0123) edge [loop above] node {$b$} (0123);
  \path[->] (123)  edge [right] node {$b$} (023); 
  \path[->] (023)  edge [right] node {$b$} (013);
  \path[->] (013)  edge [right] node {$b$} (012);
  \path[->] (012)  edge [bend left,left] node {$b$} (123);
  
  \path[->] (023)  edge [above] node {$a$} (13);
  \path[->] (13)   edge [bend left,left] node {$b$} (02);
  \path[->] (02)   edge [bend left,right] node {$b$} (13);
  \path[->] (13)   edge [bend left,right] node {$a$} (23);
  \path[->] (23)   edge [bend left,left] node {$a$} (13);
  \path[->] (23)   edge [above] node {$b$} (03);
  \path[->] (03)   edge [right] node {$b$} (01);
  \path[->] (01)   edge [above] node {$a,b$} (12);
  \path[->] (12)   edge [left] node {$b$} (23);
  \path[->] (03)   edge [above] node {$a$} (13);
  \path[->] (12)   edge [loop below] node {$a$} (12);
  
  \path[->] (02) edge [bend left,above] node {$a$} (1);
  
\end{tikzpicture}}%
\end{minipage}%
 \caption{\footnotesize The automaton from Example~\ref{ex:thm_not_usable}
 and its power automaton. An example of an automaton whose
 set of synchronizing words has maximal state complexity
 but for which Theorem~\ref{prop:bin_max_sc_1}
 or Proposition~\ref{prop:bin_max_sc}
 do not apply, not for $a$ and not for any word of rank~$3$. The final states in the
 power automaton are marked with double circles.}
 \label{fig:thm_not_usable}
\end{figure}
 \end{example}

\begin{toappendix}
 Some remarks how all the words of rank $3$
  were generated in Example~\ref{ex:thm_not_usable}, to convince the reader that
 it is indeed correct. I used \textsc{GAP}\footnote{\url{https://www.gap-system.org/}}
 to compute all words of rank $3$ with the following commands.
 
 {\footnotesize
 \begin{verbatim}
gap> LoadPackage( "SgpViz" );;
gap> S:=Semigroup(Transformation([2,3,4,1]),Transformation([2,3,2,4]));;
gap> Print(DotForDrawingDClassOfElement(S,Transformation([2,3,2,4])));
digraph  DClassOfElement {
graph [center=yes,ordering=out];
node [shape=plaintext];
edge [color=cornflowerblue,arrowhead=none];
1 [label=<
<TABLE BORDER="0" CELLBORDER="0" CELLPADDING="0" CELLSPACING="0" PORT="1">
<TR><TD BORDER="0"><TABLE CELLSPACING="0"><TR><TD BGCOLOR="white" BORDER="0">*b^2a^2b^2</TD></TR>
<TR><TD BGCOLOR="white" BORDER="0">a^2b^2</TD></TR>
<TR><TD BGCOLOR="white" BORDER="0">ab^2a^2b^2</TD></TR>
<TR><TD BGCOLOR="white" BORDER="0">ab^2ab^2</TD></TR>
<TR><TD BGCOLOR="white" BORDER="0">ab^2</TD></TR>
<TR><TD BGCOLOR="white" BORDER="0">b^2ab^2</TD></TR>
</TABLE></TD><TD BORDER="0"><TABLE CELLSPACING="0"><TR><TD BGCOLOR="white" BORDER="0">a</TD></TR>
<TR><TD BGCOLOR="white" BORDER="0">b^2a</TD></TR>
<TR><TD BGCOLOR="white" BORDER="0">*a^2</TD></TR>
<TR><TD BGCOLOR="white" BORDER="0">b^2a^2</TD></TR>
<TR><TD BGCOLOR="white" BORDER="0">ab^2a</TD></TR>
<TR><TD BGCOLOR="white" BORDER="0">ab^2a^2</TD></TR>
</TABLE></TD><TD BORDER="0"><TABLE CELLSPACING="0"><TR><TD BGCOLOR="white" BORDER="0">ab^3</TD></TR>
<TR><TD BGCOLOR="white" BORDER="0">b^2ab^3</TD></TR>
<TR><TD BGCOLOR="white" BORDER="0">a^2b^3</TD></TR>
<TR><TD BGCOLOR="white" BORDER="0">b^2a^2b^3</TD></TR>
<TR><TD BGCOLOR="white" BORDER="0">ab^2ab^3</TD></TR>
<TR><TD BGCOLOR="white" BORDER="0">ab^2a^2b^3</TD></TR>
</TABLE></TD><TD BORDER="0"><TABLE CELLSPACING="0"><TR><TD BGCOLOR="white" BORDER="0">ab^2ab</TD></TR>
<TR><TD BGCOLOR="white" BORDER="0">ab^2a^2b</TD></TR>
<TR><TD BGCOLOR="white" BORDER="0">b^2ab</TD></TR>
<TR><TD BGCOLOR="white" BORDER="0">ab</TD></TR>
<TR><TD BGCOLOR="white" BORDER="0">b^2a^2b</TD></TR>
<TR><TD BGCOLOR="white" BORDER="0">a^2b</TD></TR>
</TABLE></TD></TR>
<TR><TD BORDER="0"><TABLE CELLSPACING="0"><TR><TD BGCOLOR="white" BORDER="0">bab^2a^2b^2</TD></TR>
<TR><TD BGCOLOR="white" BORDER="0">bab^2</TD></TR>
<TR><TD BGCOLOR="white" BORDER="0">b^3a^2b^2</TD></TR>
<TR><TD BGCOLOR="white" BORDER="0">b^3ab^2</TD></TR>
<TR><TD BGCOLOR="white" BORDER="0">ba^2b^2</TD></TR>
<TR><TD BGCOLOR="white" BORDER="0">bab^2ab^2</TD></TR>
</TABLE></TD><TD BORDER="0"><TABLE CELLSPACING="0"><TR><TD BGCOLOR="white" BORDER="0">ba^2</TD></TR>
<TR><TD BGCOLOR="white" BORDER="0">bab^2a</TD></TR>
<TR><TD BGCOLOR="white" BORDER="0">ba</TD></TR>
<TR><TD BGCOLOR="white" BORDER="0">bab^2a^2</TD></TR>
<TR><TD BGCOLOR="white" BORDER="0">b^3a</TD></TR>
<TR><TD BGCOLOR="white" BORDER="0">b^3a^2</TD></TR>
</TABLE></TD><TD BORDER="0"><TABLE CELLSPACING="0"><TR><TD BGCOLOR="white" BORDER="0">*ba^2b^3</TD></TR>
<TR><TD BGCOLOR="white" BORDER="0">bab^2ab^3</TD></TR>
<TR><TD BGCOLOR="white" BORDER="0">bab^3</TD></TR>
<TR><TD BGCOLOR="white" BORDER="0">bab^2a^2b^3</TD></TR>
<TR><TD BGCOLOR="white" BORDER="0">b^3ab^3</TD></TR>
<TR><TD BGCOLOR="white" BORDER="0">b^3a^2b^3</TD></TR>
</TABLE></TD><TD BORDER="0"><TABLE CELLSPACING="0"><TR><TD BGCOLOR="white" BORDER="0">b^3ab</TD></TR>
<TR><TD BGCOLOR="white" BORDER="0">*b^3a^2b</TD></TR>
<TR><TD BGCOLOR="white" BORDER="0">bab^2ab</TD></TR>
<TR><TD BGCOLOR="white" BORDER="0">ba^2b</TD></TR>
<TR><TD BGCOLOR="white" BORDER="0">bab^2a^2b</TD></TR>
<TR><TD BGCOLOR="white" BORDER="0">bab</TD></TR>
</TABLE></TD></TR>
</TABLE>>];
}
gap> Print(DotForDrawingDClassOfElement(S,Transformation([2,3,2,4])),1);
digraph  DClassOfElement {
graph [center=yes,ordering=out];
node [shape=plaintext];
edge [color=cornflowerblue,arrowhead=none];
1 [label=<
<TABLE BORDER="0" CELLBORDER="0" CELLPADDING="0" CELLSPACING="0" PORT="1">
<TR><TD BORDER="0"><TABLE CELLSPACING="0"><TR><TD BGCOLOR="white" BORDER="0">*b^2a^2b^2</TD></TR>
<TR><TD BGCOLOR="white" BORDER="0">a^2b^2</TD></TR>
<TR><TD BGCOLOR="white" BORDER="0">ab^2a^2b^2</TD></TR>
<TR><TD BGCOLOR="white" BORDER="0">ab^2ab^2</TD></TR>
<TR><TD BGCOLOR="white" BORDER="0">ab^2</TD></TR>
<TR><TD BGCOLOR="white" BORDER="0">b^2ab^2</TD></TR>
</TABLE></TD><TD BORDER="0"><TABLE CELLSPACING="0"><TR><TD BGCOLOR="white" BORDER="0">a</TD></TR>
<TR><TD BGCOLOR="white" BORDER="0">b^2a</TD></TR>
<TR><TD BGCOLOR="white" BORDER="0">*a^2</TD></TR>
<TR><TD BGCOLOR="white" BORDER="0">b^2a^2</TD></TR>
<TR><TD BGCOLOR="white" BORDER="0">ab^2a</TD></TR>
<TR><TD BGCOLOR="white" BORDER="0">ab^2a^2</TD></TR>
</TABLE></TD><TD BORDER="0"><TABLE CELLSPACING="0"><TR><TD BGCOLOR="white" BORDER="0">ab^3</TD></TR>
<TR><TD BGCOLOR="white" BORDER="0">b^2ab^3</TD></TR>
<TR><TD BGCOLOR="white" BORDER="0">a^2b^3</TD></TR>
<TR><TD BGCOLOR="white" BORDER="0">b^2a^2b^3</TD></TR>
<TR><TD BGCOLOR="white" BORDER="0">ab^2ab^3</TD></TR>
<TR><TD BGCOLOR="white" BORDER="0">ab^2a^2b^3</TD></TR>
</TABLE></TD><TD BORDER="0"><TABLE CELLSPACING="0"><TR><TD BGCOLOR="white" BORDER="0">ab^2ab</TD></TR>
<TR><TD BGCOLOR="white" BORDER="0">ab^2a^2b</TD></TR>
<TR><TD BGCOLOR="white" BORDER="0">b^2ab</TD></TR>
<TR><TD BGCOLOR="white" BORDER="0">ab</TD></TR>
<TR><TD BGCOLOR="white" BORDER="0">b^2a^2b</TD></TR>
<TR><TD BGCOLOR="white" BORDER="0">a^2b</TD></TR>
</TABLE></TD></TR>
<TR><TD BORDER="0"><TABLE CELLSPACING="0"><TR><TD BGCOLOR="white" BORDER="0">bab^2a^2b^2</TD></TR>
<TR><TD BGCOLOR="white" BORDER="0">bab^2</TD></TR>
<TR><TD BGCOLOR="white" BORDER="0">b^3a^2b^2</TD></TR>
<TR><TD BGCOLOR="white" BORDER="0">b^3ab^2</TD></TR>
<TR><TD BGCOLOR="white" BORDER="0">ba^2b^2</TD></TR>
<TR><TD BGCOLOR="white" BORDER="0">bab^2ab^2</TD></TR>
</TABLE></TD><TD BORDER="0"><TABLE CELLSPACING="0"><TR><TD BGCOLOR="white" BORDER="0">ba^2</TD></TR>
<TR><TD BGCOLOR="white" BORDER="0">bab^2a</TD></TR>
<TR><TD BGCOLOR="white" BORDER="0">ba</TD></TR>
<TR><TD BGCOLOR="white" BORDER="0">bab^2a^2</TD></TR>
<TR><TD BGCOLOR="white" BORDER="0">b^3a</TD></TR>
<TR><TD BGCOLOR="white" BORDER="0">b^3a^2</TD></TR>
</TABLE></TD><TD BORDER="0"><TABLE CELLSPACING="0"><TR><TD BGCOLOR="white" BORDER="0">*ba^2b^3</TD></TR>
<TR><TD BGCOLOR="white" BORDER="0">bab^2ab^3</TD></TR>
<TR><TD BGCOLOR="white" BORDER="0">bab^3</TD></TR>
<TR><TD BGCOLOR="white" BORDER="0">bab^2a^2b^3</TD></TR>
<TR><TD BGCOLOR="white" BORDER="0">b^3ab^3</TD></TR>
<TR><TD BGCOLOR="white" BORDER="0">b^3a^2b^3</TD></TR>
</TABLE></TD><TD BORDER="0"><TABLE CELLSPACING="0"><TR><TD BGCOLOR="white" BORDER="0">b^3ab</TD></TR>
<TR><TD BGCOLOR="white" BORDER="0">*b^3a^2b</TD></TR>
<TR><TD BGCOLOR="white" BORDER="0">bab^2ab</TD></TR>
<TR><TD BGCOLOR="white" BORDER="0">ba^2b</TD></TR>
<TR><TD BGCOLOR="white" BORDER="0">bab^2a^2b</TD></TR>
<TR><TD BGCOLOR="white" BORDER="0">bab</TD></TR>
</TABLE></TD></TR>
</TABLE>>];
}
1
\end{verbatim}
}
This output could be rendered with \textsc{graphiviz}\footnote{\url{https://graphviz.org/}}
by the following commands:
\texttt{dot -Tpng <inputfile> -o <outputfile>}, which will produce png files as output.
Note that, if used on Windows, after installation, the command \texttt{dot -c}
has to be executed to install additional plugins, for example for rendering.
If everything went fine, the following output should be produced.

\noindent\begin{minipage}{0.5\textwidth}
\includegraphics[width=1.03\textwidth]{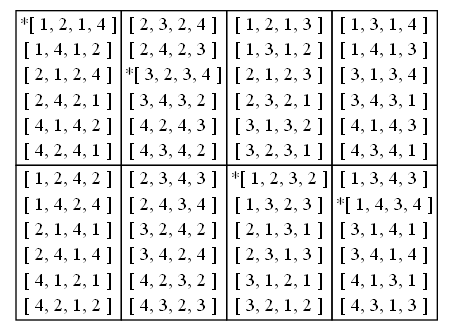}
\end{minipage}%
\begin{minipage}{0.5\textwidth}
\includegraphics[width=1\textwidth]{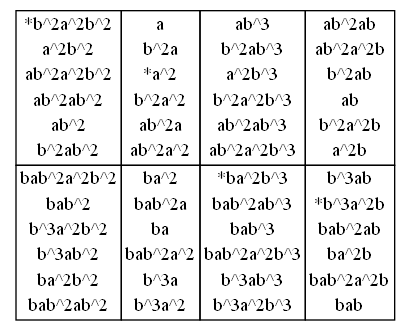}
\end{minipage}
\end{toappendix}

\section{Automata over Binary Alphabets}
\label{sec:binary_case}

Here, we take a closer look at automata over a binary alphabet.
%In Proposition~\ref{prop:binary_compl_reach} we determine the form 
%of completely reachable automata over a binary alphabet, which yields
% Our first lemma is a result of how the letters act
% on automata with at least $n > 2$ many states, if we can reach
% every subset of size $n - 1$. In particular, it applies to completely reachable
% automata for $n > 2$ and also yields that over binary alphabets, if the set of synchronizing words
% has maximal state complexity, then the automaton is completely reachable.
% Lastly, in 
% Proposition~\ref{prop:bin_max_sc_1} and Proposition~\ref{prop:bin_max_sc},
% we state two sufficient conditions which imply that the set of synchronizing words
% has maximal state complexity for automata having two letters, one of rank $n - 1$
% and the other permuting the states in a single cycle.
% Both conditions entail all known cases of automata over a binary alphabets for
% which the set of synchronizing words has maximal state 
% complexity~\cite{DBLP:journals/corr/Maslennikova14,DBLP:journals/ijfcs/Maslennikova19}.
We apply our results and solve an open problem posed in~\cite{DBLP:journals/ijfcs/Maslennikova19}.
%concerning a family of automata for which this was only conjectured.
In general, if a letter has rank $k$
and some subset is mapped to a subset of size $k$, we must hit the full image of 
this letter. This gives, if we only have two letters but more than two states and no letter has full rank, 
that we can only reach at most two subsets of size $n - 1$. 
So, if more $(n-1)$-sets are reachable, we must have precisely one letter of rank $n - 1$
and Corollary~\ref{cor:n-1_reachable_iff_transitive_perm_grp}
gives the next result.

% primitive gruppe (damit transitiv-> strongly connected!), dann gilt stc 2^n - n folgt completely reachable
% umkehurng?
%
% paare unterscheidbar? müsste schon genügen.
% 
% betrachte alle paare, die nur auf andere paare abbildbar. (Diese sind nicht unterschiedbar).
% angenommen es gibt mehr als zwei? (alle geht auch nicht, da singleton erreichbar)
%  definiere relation damit
%
% ungerichteter graph "orbital"?
%
% dann hätte wir zwei neue charakterisierungen endlicher primitiver permgruppen 

% \begin{proposition}
%  For any alphabet, if $\stc(\Syn(\mathscr A)) = 2^{|Q|} - |Q|$,
%  then $\mathscr A$ is completely reachable.
% \end{proposition}
% \begin{proof}
%  We show that $\mathscr A$ is strongly connected, which implies that from each singleton set of states, we
%  can reach every other singleton set of states, which implies together with the assumption
%  that $\mathscr A$ is completely reachable.
%  For if $A,B \subseteq Q$ where two distinct strongly connected components,
%  then for each word $w \in \Sigma^*$, we have $\delta(A, w) \subseteq A$
%  % okday, wohl doch nur im binären fall erstmal...
%  % kann ja aus einem raus, d.h. A -> B und dann alle mengen irgendwie erzeugen [beispiel!]
% \end{proof}

\begin{lemma} % geht auch für größeres alphabet, siehe argumente, wenn Q groß genug
\label{prop:binary_compl_reach} 
 Let $\Sigma = \{a,b\}$ be a binary alphabet
 and $\mathscr A = (\Sigma, Q, \delta)$
 a finite semi-automaton with $n > 2$ states.
%  Then $\stc(\Syn(\mathscr A)) = 2^{|Q|} - |Q|$
%  if and only if $\mathscr A$ is completely
%  reachable, and in this case exactly one letter
%  acts as a cyclic permutation with a single orbit % irgendwie die cerny familie
%  and the other letter has rank $|Q| - 1$. % oder aussage
%  % über ranks äquvialenz?, dann ringschluss
%  % permutation ein zyklus? transitiv, single cycle permutation
%  % dann im binären compl reachable einfache entscheidbar, bzw für hinreichend große
%  % automaten
 Then, the following conditions are equivalent:
 \begin{enumerate}
% todo, nicht sicher....
% \item $\mathscr A$ is completely reachable,
 \item every subset of size $n - 1$ is reachable,
 \item exactly one letter
 acts as a cyclic permutation with a single orbit
 and the other letter has rank $n - 1$. % inverse bilder, induktiv
 \end{enumerate}
 In particular,  over a binary alphabets, completely reachable automata and those whose set of synchronizing
 words has maximal state complexity are circular.
\end{lemma}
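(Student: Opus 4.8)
The plan is to obtain this lemma as the binary specialization of Proposition~\ref{prop:n-1_reachable_iff_transitive_perm_grp}, after a short counting remark fixes the coarse shape of the automaton. I would first dispatch the easy implication $(2)\Rightarrow(1)$ by hand: if $b$ is an $n$-cycle and $a$ has rank $n-1$, write $\delta(Q,a)=Q\setminus\{s\}$; since each $b^{j}$ is a bijection, $\delta(Q,ab^{j})=\delta(Q\setminus\{s\},b^{j})=Q\setminus\{\delta(s,b^{j})\}$, and as $b$ has a single orbit the states $\delta(s,b^{j})$ for $0\le j<n$ exhaust $Q$, so every $(n-1)$-subset $Q\setminus\{q\}$ is reachable. (Equivalently, this is the $(2)\Rightarrow(1)$ direction of Proposition~\ref{prop:n-1_reachable_iff_transitive_perm_grp} with $m=1$.)

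For $(1)\Rightarrow(2)$ I would check that the hypothesis of Proposition~\ref{prop:n-1_reachable_iff_transitive_perm_grp} holds: the number $m$ of letters of rank $n-1$ satisfies $m\le|\Sigma|=2<3\le n$, so $n>m$. Applying the proposition, condition~(1) yields a letter of rank $n-1$ together with a subset of letters generating a non-trivial permutation group, and at most $m$ orbits. A full-rank (permutation) letter is never one of the $m$ letters of rank $n-1$, and with only two letters available this forces $m=1$: exactly one letter, say $a$, has rank $n-1$, and the other, $b$, is a non-identity permutation whose cyclic group $\langle b\rangle$ is the non-trivial group in question. Then ``at most $m=1$ orbit'' means $\langle b\rangle$ acts transitively on $Q$, and since $\langle b\rangle$ is cyclic, transitivity on the $n$-element set $Q$ forces $b$ to be a single $n$-cycle; together with $\operatorname{rk}(a)=n-1$ this is condition~(2). (If one prefers a self-contained route: with no full-rank letter, a shortest word reaching an $(n-1)$-set must end by applying a rank-$(n-1)$ letter to $Q$, so at most two $(n-1)$-subsets are reachable, contradicting $n\ge 3$; this produces a permutation letter, it is unique since two permutation letters keep every reachable set of size $n$, and tracking the size-$(n-1)$ reachable sets shows they are exactly the $\langle b\rangle$-rotations of $\delta(Q,a)$, whence $b$ is an $n$-cycle.)

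For the ``in particular'' sentence, condition~(2) is precisely the definition of a circular automaton, so it suffices to note that both hypotheses there imply condition~(1). If $\mathscr A$ is completely reachable then in particular every $(n-1)$-subset is reachable. If $\operatorname{sc}(\Syn(\mathscr A))=2^{n}-n$ then, by the characterization of maximal state complexity recalled in Section~\ref{sec::preliminaries}, all subsets of size $\ge 2$ are reachable; since $n>2$ gives $n-1\ge 2$, this again includes every $(n-1)$-subset. Hence in either case $\mathscr A$ is circular by the equivalence just proved.

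The hard part is not really an obstacle but a point to state carefully: justifying the last step of $(1)\Rightarrow(2)$, namely that a transitive action of the cyclic group $\langle b\rangle$ on the $n$-element set $Q$ means $b$ itself is an $n$-cycle (equivalently, that the $b$-orbit of the unique state omitted from $\delta(Q,a)$ is all of $Q$), and making sure the ``at most $m$ orbits'' clause genuinely pins down $m=1$ rather than allowing $m=2$ with no permutation letter at all (which is excluded because then no subset of letters generates a non-trivial permutation group, contradicting condition~(2) of Proposition~\ref{prop:n-1_reachable_iff_transitive_perm_grp}).
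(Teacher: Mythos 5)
Your proposal is correct and follows essentially the same route as the paper: the paper also obtains this lemma by specializing Proposition~\ref{prop:n-1_reachable_iff_transitive_perm_grp} to the binary case (noting $m\le 2<n$, so exactly one letter has rank $n-1$ and the other must generate the non-trivial, here transitive and hence single-cycle, permutation group), preceded by the same counting remark that with no permutational letter at most two $(n-1)$-subsets are reachable. Your write-up merely makes explicit the steps the paper leaves to the reader, including the easy direction $(2)\Rightarrow(1)$ and the ``in particular'' clause.
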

 
 \begin{remark}\footnotesize
  In Lemma~\ref{prop:binary_compl_reach}, we need $n > 2$.
  For let $\mathscr A = (\{a,b\}, \{p,q\}, \delta)$
  with $\delta(p, a) = \delta(q, a) = q$
  and $\delta(p, b) = \delta(q, b) = p$.
  Then $\mathscr A$ is completely reachable, but no letter acts as a non-trivial permutation.
 \end{remark}

With Theorem~\ref{prop:bin_max_sc_1}, we can solve an open problem from~\cite{DBLP:journals/ijfcs/Maslennikova19}.
For $n > 5$, define the automata\footnote{I slightly changed the numbering of
the states with respect to the action of the letter $a$
compared to~\cite{DBLP:journals/ijfcs/Maslennikova19}.}
$\mathscr K_n = (\Sigma, [n], \delta)$, introduced 
in~\cite{DBLP:journals/ijfcs/Maslennikova19},
with
\begin{align*} 
 \delta(i, b) & = i + 1 \mbox{ for } i \in \{0,\ldots, n-2\}, \mbox{ and } \delta(n-1, b) = 0; \\
 \delta(i, a) & = i + 1 \mbox{ for } i \in \{1, \ldots,n-3\}, \delta(n-1,a) = 0,\delta(n-2,a) = 1, 
 \delta(0, a) = 3.
\end{align*}
Please see Example~\ref{ex:aut_families_max_sc} 
for an illustration of this automata family. 
In~\cite{DBLP:journals/ijfcs/Maslennikova19}, it was conjectured that $\stc(\Syn(\mathscr K_n)) = 2^n - n$
for every odd $n > 5$. With Theorem~\ref{prop:bin_max_sc_1},
together with Proposition~\ref{prop:don_compl_reach} 
and Lemma~\ref{lem:compl_reach_implies_max_Syn_2sets}, we can confirm this.

\begin{proposition}
\label{prop:Kn}
  Let $n > 5$ be odd. Then we have $\stc(\Syn(\mathscr K_n)) = 2^n - n$.
\end{proposition}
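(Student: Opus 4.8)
The plan is to combine Proposition~\ref{prop:don_compl_reach} (for complete reachability) with Theorem~\ref{prop:bin_max_sc_1} (for distinguishability of $2$-sets), and then conclude via Lemma~\ref{lem:compl_reach_implies_max_Syn_2sets}. First I would record the structure of $\mathscr K_n$: the letter $b$ acts as the cycle $i \mapsto (i+1) \bmod n$, while $a$ sends $0 \mapsto 3$, $i \mapsto i+1$ for $1 \le i \le n-3$, $n-2 \mapsto 1$ and $n-1 \mapsto 0$. Hence $\delta([n], a) = [n] \setminus \{n-1\}$, so $a$ has rank $n-1$, and the unique pair collapsed by $a$ is $\delta^{-1}(3, a) = \{0, 2\}$. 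To get complete reachability I would invoke Proposition~\ref{prop:don_compl_reach} with $s = n-1$, $t = 3$ and $d = 4$: we have $\delta([n], a) = [n] \setminus \{s\}$, $|\delta^{-1}(t, a)| = 2$, and $\delta(n-1, b^4) = (n+3) \bmod n = 3 = t$, while $\gcd(4, n) = 1$ since $n$ is odd. Thus $\mathscr K_n$ is completely reachable, and by Lemma~\ref{lem:compl_reach_implies_max_Syn_2sets} it suffices to show that all $2$-sets of states are pairwise distinguishable in $\mathcal P_{\mathscr K_n}$.

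For the distinguishability I would apply Theorem~\ref{prop:bin_max_sc_1} with $q = 0$ and $d = 2$; note $\gcd(2, n) = 1$ as $n$ is odd, and indeed $\delta(0, b^2 a) = \delta(2, a) = 3 = \delta(0, a)$, so $0$ is an admissible choice of the special state. For $0 < m < n$ one computes $\delta(0, b^m a) = \delta(m, a)$ and $\delta(0, a b^{n+m-2}) = \delta(3, b^{n+m-2}) = (m+1) \bmod n$. Running through the definition of $a$, these two values coincide — i.e. Equation~\eqref{eqn:first_case_apply_a} holds — for every $m$ with $0 < m < n$ except $m = n-2$, where the left-hand side is $\delta(n-2, a) = 1$ while the right-hand side is $(n-1) \bmod n = n-1$. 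Since $n$ is odd, $m = n-2$ is not divisible by $d = 2$, so Equation~\eqref{eqn:second_case_apply_a} is permitted there; taking $r = 2$ (which is divisible by $d$ and satisfies $0 \le r < n$) gives $\delta(0, b^{n-2} a b^2) = \delta(1, b^2) = 3 = \delta(0, a)$, so the second alternative of~\eqref{eqn:second_case_apply_a} is satisfied. Hence all hypotheses of Theorem~\ref{prop:bin_max_sc_1} hold, all $2$-sets are distinguishable in $\mathcal P_{\mathscr K_n}$, and therefore $\stc(\Syn(\mathscr K_n)) = 2^n - n$.

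The arithmetic is routine once the parameters are fixed, so the only genuinely delicate point is choosing them correctly. The simpler criteria of Corollary~\ref{cor:bin_max_sc_1} and Proposition~\ref{prop:bin_max_sc} do not apply: they amount to the case $d = 1$, i.e. reducing the $b$-distance by one per application of the rank-$(n-1)$ word, whereas here the collapsed pair $\{0,2\}$ has $b$-distance $2$, not $1$. One must therefore use the full strength of Theorem~\ref{prop:bin_max_sc_1} with $d = 2$ — which is coprime to $n$ precisely because $n$ is odd — and be prepared to handle the single exceptional residue $m = n-2$ through Equation~\eqref{eqn:second_case_apply_a} rather than Equation~\eqref{eqn:first_case_apply_a}. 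Identifying that this is the only residue where~\eqref{eqn:first_case_apply_a} fails, and that $r = 2$ then works, is the crux of the verification; everything else follows mechanically from the already-established results.
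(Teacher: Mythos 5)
Your proposal is correct and follows essentially the same route as the paper's proof: complete reachability via Proposition~\ref{prop:don_compl_reach} with $s=n-1$, $t=3$, $d=4$, then distinguishability of $2$-sets via Theorem~\ref{prop:bin_max_sc_1} with $q=0$, $d=2$, handling the single exceptional residue $m=n-2$ through Equation~\eqref{eqn:second_case_apply_a} with $r=2$, and concluding with Lemma~\ref{lem:compl_reach_implies_max_Syn_2sets}. All parameter choices and computations match the paper's argument.
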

\begin{proof}
  First, we will show, using Proposition~\ref{prop:don_compl_reach}, that
  the automata $\mathscr K_n$, for odd $n > 5$, are completely reachable.
  Then, we will show, using Proposition~\ref{prop:bin_max_sc_1},
  that all $2$-subsets of states are distinguishable in the power automaton $\mathcal P_{\mathscr K_n}$. 
  With
  Lemma~\ref{lem:compl_reach_implies_max_Syn_2sets}, this would
  then give $\stc(\Syn(\mathscr K_n)) = 2^n - n$.
 \begin{enumerate}
 \item For $n > 5$ odd, the automata $\mathscr K_n$ are completely reachable:  We have two letters, the letter $a$ has rank $n-1$
   and the letter $b$ is a cyclic permutation of all the states.
   Also $\delta(Q, a) = Q \setminus \{ n - 1\}$, $\delta^{-1}(3, a) = \{ 0, 2 \}$
   and $\delta(n - 1, b^4) = 3$.
   If $n$ is odd, then $n$ and $4$ are coprime.
   We have listed the prerequisites of Proposition~\ref{prop:don_compl_reach}, 
   hence applying it gives that $\mathscr K_n$ is completely reachable.

 \item For $n > 5$ odd, in $\mathscr K_n$ all $2$-sets are distinguishable
  in $\mathcal P_{\mathscr K_n}$:   Let $q = 0$. Then
   $$
    \delta(q, ba) = 2 = \delta(q, ab^{n - 1}) = \delta(q, ab^{n + 1 - 2}).
   $$
   For $m \in \{2, \ldots, n - 3\}$, we have $\delta(0, b^m) = m$ and
   $$
    \delta(q, b^ma) = m + 1 = \delta(3, b^{m-2}) = \delta(q, ab^{m-2}) = \delta(q, ab^{n + m-2}).
   $$
   The value $m = n - 2$ does not follow the above pattern, but we have
   $\delta(q, b^{n-2}ab^2) = \delta(1, b^2) = 3 = \delta(q, a)$.
   And lastly, for $m = n - 1$, we have 
   $$
    \delta(q, b^{n-1}a) = q = \delta(q, ab^{n-1-2}).
   $$
   So, with $d = 2$ and $r = 2$, for odd $n$, as then $n - 2$ is not divisible by $d$,
   and with $q = 0$, the prerequisites of Theorem~\ref{prop:bin_max_sc_1}
   are fulfilled and give the claim.
 \end{enumerate}
 So, both statements together with Lemma~\ref{lem:compl_reach_implies_max_Syn_2sets}
 yield $\stc(\Syn(\mathscr K_n)) = 2^n - n$.~$\qed$
\end{proof}

Lastly, let us give some additional examples from the literature~\cite{DBLP:conf/mfcs/AnanichevGV10,AnaVolGus2013,DBLP:journals/corr/Maslennikova14,DBLP:journals/ijfcs/Maslennikova19}
for which our results apply.

\begin{example} % todo genauer zeigen mit den ergebenissen aus paper, und formal definieren 
%mit übergangsrelation?
\label{ex:aut_families_max_sc}
 \footnotesize
 Please see Figure~\ref{fig:aut_ex} for the automata families.
 The automata $\mathscr C_n$ gives the \v{C}ern\'y family,
 the automata $\mathscr L_n$, $\mathscr V_n$, $\mathscr F_n$
 and $\mathscr K_n$ were introduced 
 in~\cite{DBLP:conf/mfcs/AnanichevGV10,DBLP:journals/corr/Maslennikova14,DBLP:journals/ijfcs/Maslennikova19}.
 There, except for $\mathscr K_n$, it was established that in each
 case (for $\mathscr F_n$ only if $n$ is odd and $n > 3$) 
 the set of synchronizing words has maximal state complexity.
 Note that our results, namely Theorem~\ref{prop:bin_max_sc_1},
 together with Proposition~\ref{prop:don_compl_reach}
 and Lemma~\ref{lem:compl_reach_implies_max_Syn_2sets}
 also give these results.
%  Additionally, to illustrate that the scope of our results
%  is much wider, with $\mathscr U_n$ we introduce a new family of automata
%  for $n \ge 2$, which is completely reachable and $\stc(\Syn(\mathscr U_n)) = 2^n - n$.
%  Set $\mathscr U_n = (\{a,b\}, [n], \delta)$,~where
% \begin{comment} 
%  %\vspace*{-0.1cm}
% \begin{align*} 
%  \delta(i, b) & = i + 1 \mbox{ for } i \in \{0,\ldots, n-2\}, \mbox{ and } \delta(n-1, b) = 0; \\
%  \delta(i, a) & = (i - 2) \bmod n \mbox{ for } i \in [n] \setminus\{1\}, \delta(1,a) = 0.
% \end{align*}
% %Then, for all $0 \le m < n-1$, we have $\delta(0, b^{m+1}a) = \delta(0, ab^{m})$.
% Then, for all $0 \le m < n$, we have $\delta(0, b^{m}a) = \delta(0, ab^{n + m - 1})$.
% \end{comment}
%  \begin{align*} 
%  \delta(i, b) & = i + 1 \mbox{ for } i \in \{0,\ldots, n-2\}, \mbox{ and } \delta(n-1, b) = 0; \\
%  \delta(i, a) & = n - 3 + i \bmod n \mbox{ for } i \in \{1,\ldots, n-1\}, \delta(0,a) = n-2.
% \end{align*}
% Then, for all $0 \le m < n$, we have $\delta(0, b^{m+1}a) = \delta(0, ab^{m})$.
% So, by Proposition~\ref{prop:bin_max_sc}, we get $\stc(\Syn(\mathscr U_n)) = 2^n - n$.
 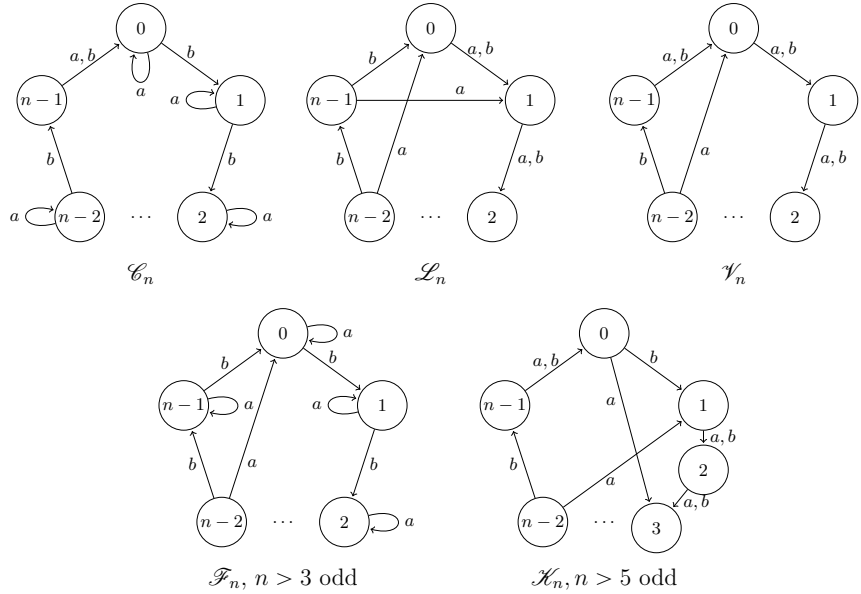
\begin{figure}[htb]
\begin{minipage}{0.33\textwidth}
 \scalebox{.75}{\begin{tikzpicture}[shorten >=1pt,->,node distance=5cm]
  %\tikzstyle{vertex}=[circle,color=red!60,fill=black!10,minimum size=25pt,inner sep=1pt,very thick]
  \tikzstyle{vertex}=[circle,draw,minimum size=25pt,inner sep=1pt]
 
  \foreach \name/\angle/\text in {P-n-2/234/n-2, P-n-1/162/n-1, 
                                  P-0/90/0, P-1/18/1, P-2/-54/2}
    \node[vertex] (\name) at (\angle:1.85cm) {\small $\text$};

%   \foreach \name/\angle/\text in {Q-1/234/10, Q-2/162/11, 
%                                   Q-3/90/12, Q-4/18/13, Q-5/-54/14}
%     \node[vertex,xshift=9cm,yshift=.5cm] (\name) at (\angle:1cm) {$\text$};

  %\foreach \from/\to in {s/2,2/3,3/4,3/4,15/16,16/17,17/18,18/19,19/t}
  %  \draw (G-\from) -- (G-\to);

  %\foreach \from/\to in {1/2,2/3,3/4,4/5,5/1,1/3,2/4,3/5,4/1,5/2}
  %  { \draw (P-\from) -- (P-\to); \draw (Q-\from) -- (Q-\to); }

 %  \foreach \from/\to in {n-2/n-1, n-1/0, 0/1, 1/2} 
 %   { \path[->] (P-\from) edge [above] node {$a$} (P-\to); }

%   \path[->] (P-1) edge node {$a_1, a_2$} (P-2);

   \node at (0, -1.5) {$\ldots$};

   \path[->] (P-n-2) edge [left] node {$b$} (P-n-1);
   \path[->] (P-n-1) edge [above,pos=0.35] node {$a,b$} (P-0);
   \path[->] (P-0) edge [above] node {$b$} (P-1);
   \path[->] (P-1) edge [right] node {$b$} (P-2);
   
   \path[->] (P-n-2) edge [loop left] node {$a$} (P-n-2);
   \path[->] (P-0) edge [loop below] node {$a$} (P-0);
   \path[->] (P-1) edge [loop left] node {$a$} (P-1);
   \path[->] (P-2) edge [loop right] node {$a$} (P-2);
   
   \node at (0, -2.5) {\large $\mathscr C_n$};
\end{tikzpicture}}%
\end{minipage}%
\begin{minipage}{0.33\textwidth}
\scalebox{.75}{\begin{tikzpicture}[shorten >=1pt,->,node distance=5cm]
  \tikzstyle{vertex}=[circle,draw,minimum size=25pt,inner sep=1pt]
 
  \foreach \name/\angle/\text in {P-n-2/234/n-2, P-n-1/162/n-1, 
                                  P-0/90/0, P-1/18/1, P-2/-54/2}
    \node[vertex] (\name) at (\angle:1.85cm) {\small $\text$};
   
   \node at (0, -1.5) {$\ldots$};

   \path[->] (P-n-2) edge [left] node {$b$} (P-n-1);
   \path[->] (P-n-1) edge [above,pos=0.35] node {$b$} (P-0);
   \path[->] (P-0) edge [above] node {$a,b$} (P-1);
   \path[->] (P-1) edge [right] node {$a,b$} (P-2);
   
   \path[->] (P-n-2) edge [right,pos=0.3] node {$a$} (P-0);
   \path[->] (P-n-1) edge [above,pos=0.7] node {$a$} (P-1);
   
   \node at (0, -2.5) {\large $\mathscr L_n$};
\end{tikzpicture}}%
\end{minipage}%
\begin{minipage}{0.33\textwidth}
 \scalebox{.75}{\begin{tikzpicture}[shorten >=1pt,->,node distance=5cm]
  \tikzstyle{vertex}=[circle,draw,minimum size=25pt,inner sep=1pt]
 
  \foreach \name/\angle/\text in {P-n-2/234/n-2, P-n-1/162/n-1, 
                                  P-0/90/0, P-1/18/1, P-2/-54/2}
    \node[vertex] (\name) at (\angle:1.85cm) {\small $\text$};
   
   \node at (0, -1.5) {$\ldots$};

   \path[->] (P-n-2) edge [left] node {$b$} (P-n-1);
   \path[->] (P-n-1) edge [above,pos=0.35] node {$a,b$} (P-0);
   \path[->] (P-0) edge [above] node {$a,b$} (P-1);
   \path[->] (P-1) edge [right] node {$a,b$} (P-2);
   
   \path[->] (P-n-2) edge [right,pos=0.3] node {$a$} (P-0);
   
   \node at (0, -2.5) {\large $\mathscr V_n$};
\end{tikzpicture}}
\end{minipage}

\medskip

\begin{minipage}{0.16\textwidth}
 \textcolor{white}{Test}
\end{minipage}
\begin{minipage}{0.35\textwidth}
 \scalebox{.75}{\begin{tikzpicture}[shorten >=1pt,->,node distance=5cm]
  %\tikzstyle{vertex}=[circle,color=red!60,fill=black!10,minimum size=25pt,inner sep=1pt,very thick]
  \tikzstyle{vertex}=[circle,draw,minimum size=25pt,inner sep=1pt]
 
  \foreach \name/\angle/\text in {P-n-2/234/n-2, P-n-1/162/n-1, 
                                  P-0/90/0, P-1/18/1, P-2/-54/2}
    \node[vertex] (\name) at (\angle:1.85cm) {\small $\text$};

   \node at (0, -1.5) {$\ldots$};

   \path[->] (P-n-2) edge [left] node {$b$} (P-n-1);
   \path[->] (P-n-1) edge [above,pos=0.35] node {$b$} (P-0);
   \path[->] (P-0) edge [above] node {$b$} (P-1);
   \path[->] (P-1) edge [right] node {$b$} (P-2);
   
   \path[->] (P-n-2) edge [right,pos=0.25] node {$a$} (P-0);
   \path[->] (P-n-1) edge [loop right] node {$a$} (P-n-1);
   \path[->] (P-0) edge [loop right] node {$a$} (P-0);
   \path[->] (P-1) edge [loop left] node {$a$} (P-1);
   \path[->] (P-2) edge [loop right] node {$a$} (P-2);
   
   \node at (0, -2.5) {\large $\mathscr F_n$, $n > 3$ odd};
\end{tikzpicture}}%
\end{minipage}%
\begin{minipage}{0.4\textwidth}
\scalebox{.75}{\begin{tikzpicture}[shorten >=1pt,->,node distance=5cm]
  \tikzstyle{vertex}=[circle,draw,minimum size=25pt,inner sep=1pt]
 
  \foreach \name/\angle/\text in {P-n-2/234/n-2, P-n-1/162/n-1, 
                                  P-0/90/0, P-1/18/1, P-2/-18/2, P-3/-60/3}
    \node[vertex] (\name) at (\angle:1.85cm) {\small $\text$};
   
   \node at (0, -1.5) {$\ldots$};

   \path[->] (P-n-2) edge [left] node {$b$} (P-n-1);
   \path[->] (P-n-1) edge [above,pos=0.35] node {$a,b$} (P-0);
   \path[->] (P-0) edge [above] node {$b$} (P-1);
   \path[->] (P-1) edge [right] node {$a,b$} (P-2);
   \path[->] (P-2) edge [right,pos=0.7] node {$a,b$} (P-3);
   
   \path[->] (P-n-2) edge [right,pos=0.3] node {$a$} (P-1);
   \path[->] (P-0) edge [left,pos=0.3] node {$a$} (P-3);
   
   \node at (0, -2.5) {\large $\mathscr K_n , n > 5 $ odd};
\end{tikzpicture}}%
\end{minipage}%
% \begin{minipage}{0.33\textwidth}
% \scalebox{.75}{\begin{tikzpicture}[shorten >=1pt,->,node distance=5cm]
%   \tikzstyle{vertex}=[circle,draw,minimum size=25pt,inner sep=1pt]
 
%   \foreach \name/\angle/\text in {P-n-2/234/n-2, P-n-1/162/n-1, 
%                                   P-0/90/0, P-1/18/1, P-2/-18/2, P-3/-60/3}
%     \node[vertex] (\name) at (\angle:1.85cm) {\small $\text$};
   
%   \node (dots) at (0, -1.5) {$\ldots$};

%   \path[->] (P-n-2) edge [left] node {$b$} (P-n-1);
%   \path[->] (P-n-1) edge [above,pos=0.35] node {$b$} (P-0);
%   \path[->] (P-0) edge [above] node {$b$} (P-1);
%   \path[->] (P-1) edge [right] node {$b$} (P-2);
%   \path[->] (P-2) edge [right,pos=0.7] node {$b$} (P-3);
   
%  %  \path[->] (P-n-2) edge [right,pos=0.3] node {$a$} (P-1);
%  %  \path[->] (P-0) edge [left,pos=0.3] node {$a$} (P-3);
  
%   \path[->] (P-1) edge [above,pos=.7] node {$a$} (P-n-2);
%   \path[->] (P-2) edge [above] node {$a$} (P-n-1);
%   \path[->] (P-3) edge [right,pos=.8] node {$a$} (P-0);
%   \path[->] (P-0) edge [left,pos=.2] node {$a$} (P-n-2);
%   % todo von n - 1 auf einen unsichtbaren knoten
%   \path[->] (P-n-1) edge [above,pos=.3] node {$a$} (dots);
   
%   \node at (0, -2.5) {\large $\mathscr U_n$, $n \ge 2$ }; % u wie "a umgedreht", odd bedingung richtig?
% \end{tikzpicture}}%
% \end{minipage}
 \caption{\footnotesize Families of automata whose sets of synchronizing words have maximal state complexity.
  Please see Example~\ref{ex:aut_families_max_sc} for explanation.}
 \label{fig:aut_ex}
\end{figure}
 
 % prüfen für jedes n definiert?

%  We show that the automata from~\cite{marsivno} 
%  all fulfill the condition states in Proposition~\ref{prop:binary_compl_reach},
%  and also give some new automata families as examples. %mal mit buchstaben umgedreht, oder springend. bilder

\end{example}

\section{Conclusion}

 % umkehrungen, und sc() maximal in perm groups characterisieren.
 % regebnisse über binäres alphabet natürlich auch für |\Giam| \ge 2, noch hinschreiben.
 % at least binary, prop {a,b} in \Sigma 
 
 We have stated sufficient criteria for completely reachable generalized circular automata
 with a letter of rank $n - 1$ to deduce that their set of synchronizing
 words has maximal state complexity. 
 Note that by our results, every completely
 reachable automaton over a binary alphabet must have this form.
 It is natural to ask if we can generalize this to obtain a sufficient
 and necessary criterion.
 As a step in this direction, another family for which this might be tackled first
 is the family of circular automata $\mathscr A = (\Sigma, Q, \delta)$ over a binary alphabet
 with a rank $n - 1$ letter $a$ such that $\delta(Q, aa) = \delta(Q, a)$, i.e.,
 the set $\delta(Q, a)$ is permuted by $a$.
 For these automata, we can find a power of $a$ such that $a$
 acts as the identity on $\delta(Q, a)$ and the single state in $Q \setminus \delta(Q, a)$
 is mapped into $\delta(Q, a)$. Note that these automata closely resemble\todo{class oder family, einheitlich sagen!}
 those of the \Cerny~family.
 Hence, we can find easy sufficient criteria for these automata
 by applying our obtained results to the resulting automaton, where the power
 of $a$ is considered as the new rank $n - 1$ letter.
 However, as Example~\ref{ex:thm_not_usable} shows, such a criterion  
 is also not necessary. But a more finer analysis might give
 sufficient and necessary criteria.

% \medskip
% %\section*{Acknowledgement}
% {\footnotesize 
% \noindent\textbf{Acknowledgement:} I thank my supervisor, Prof. Dr. Henning Fernau, for giving valuable feedback, discussions and research suggestions concerning the content of this article. I also thank Prof. Dr.
% Mikhail V. Volkov for introducing
% our working group to the idea of completely reachable automata at a joint workshop in Trier in the spring of 2019, from which the present work draws inspiration. Lastly, the argument in the proof of Theorem~\ref{thm:primitive_iff_compl_reach}, which closely resembles a proof from~\cite{DBLP:journals/tcs/ArnoldS06},
% was communicated to me by an anonymous referee of a considerable premature 
% version of this work. I thereby sincerely thank the referee for this and other remarks.
% }

% Todo, crossref geht nicht
% https://www.google.com/search?q=inproceedings+crossref&oq=inproceedings+crossref+&aqs=chrome..69i57.5663j0j7&sourceid=chrome&ie=UTF-8
% https://tex.stackexchange.com/questions/256227/when-multiple-entries-crossref-the-same-proceeding-the-proceeding-details-are-n
% https://tex.stackexchange.com/questions/148887/use-crossref-in-bibliography-but-only-show-the-full-main-entry/148978
% https://www.google.com/search?ei=SooDX77JGpmI1fAPo92G4A0&q=crossref+not+working+latex&oq=crossref+not+working+latex&gs_lcp=CgZwc3ktYWIQAzIFCAAQzQI6BAgAEEc6BggAEBYQHjoFCCEQoAFQwDBY7zpg-DtoAHABeACAAWyIAbIEkgEDNS4xmAEAoAEBqgEHZ3dzLXdpeg&sclient=psy-ab&ved=0ahUKEwi-lpmnvLnqAhUZRBUIHaOuAdwQ4dUDCAw&uact=5

{\footnotesize 
\bibliographystyle{splncs04}
\bibliography{ms} 
}

%\clearpage
%\section{Appendix}
%\input{appendix}

\end{document}